\def\BibTeX{{\rm B\kern-.05em{\sc i\kern-.025em b}\kern-.08em
    T\kern-.1667em\lower.7ex\hbox{E}\kern-.125emX}}
\DeclareMathOperator{\vect}{vec}
\DeclareMathOperator{\Diag}{Diag}
\DeclareMathOperator{\median}{median}
\newtheorem{theorem}{Theorem}[section]
\newtheorem{remark}[theorem]{Remark}
\newtheorem{definition}[theorem]{Definition}
\newtheorem{lemma}[theorem]{Lemma}
\DeclareMathOperator{\Exs}{\mathbb{E}}
\tikzstyle{int}     = [draw, minimum size=4em]
\tikzstyle{init}    = [pin edge={to-,thin,black}]
\tikzstyle{ADD} = [draw,circle]
\tikzstyle{line}    = [draw, -latex']
\tikzstyle{box}     = [draw, minimum size=4em]
\tikzstyle{POINT} = [draw, circle,fill,scale=0.3]
\newcommand{\removelatexerror}{\let\@latex@error\@gobble}
\begin{document}
\title{Randomized Polar Codes for Anytime Distributed Machine Learning}
\author{Burak Bartan and Mert Pilanci, \IEEEmembership{Member, IEEE}
\thanks{This work was supported by the National Science Foundation under grants NSF CAREER CCF-2236829, ECCS-2037304, DMS-2134248, Army Research Office Early Career Award W911NF-21-1-0242, ACCESS – AI Chip Center for Emerging Smart Systems sponsored by InnoHK funding, Hong Kong SAR, and a Precourt Pioneering Projects seed grant. 
}
\thanks{Burak Bartan and Mert Pilanci are with the Department of Electrical Engineering, Stanford University, Stanford, CA 94305 (e-mail: bbartan@stanford.edu; pilanci@stanford.edu).}
}

\maketitle
\begin{abstract}
We present a novel distributed computing framework that is robust to slow compute nodes, and is capable of both approximate and exact computation of linear operations. The proposed mechanism integrates the concepts of randomized sketching and polar codes in the context of coded computation. We propose a sequential decoding algorithm designed to handle real valued data while maintaining low computational complexity for recovery. Additionally, we provide an anytime estimator that can generate provably accurate estimates even when the set of available node outputs is not decodable. We demonstrate the potential applications of this framework in various contexts, such as large-scale matrix multiplication and black-box optimization. We present the implementation of these methods on a serverless cloud computing system and provide numerical results to demonstrate their scalability in practice, including ImageNet scale computations.
\end{abstract}

\begin{IEEEkeywords}
Polar codes, distributed algorithms, randomized sketching, machine learning, large-scale computing.
\end{IEEEkeywords}

\section{Introduction} \label{sec:introduction}

The utilization of distributed computing has become a crucial aspect in various scientific and engineering applications that involve the manipulation of large-scale data and models. Despite its advantages, distributed computing poses several challenges in algorithm design, including inter-node communication, handling of malfunctioning or slow nodes, and maintaining data privacy.

In distributed computing, the presence of straggling nodes can significantly impact the overall computation time of an algorithm. To overcome the problem of stragglers, the idea of adding redundancy to computations using error-correcting codes has been explored in the literature by many recent works \cite{Lee2018}, \cite{baharav2018prodcodes}, \cite{yu2017polycode}. Error-correcting codes not only help speed up computations by making it possible to compute the desired output without waiting for the outputs of the straggling workers, but also provide resilience against crashes and timeouts, leading to a more robust distributed computing framework.

In this study, we propose a novel method that incorporates the principles of anytime computing \cite{dean88anytime} with coded computation and randomized sketching. The anytime computing approach allows for the generation of approximate solutions that improve in accuracy over time. Our method utilizes this concept by enabling exact recovery when the available outputs are decodable. In situations where the available outputs are not decodable, our method can provide accurate unbiased estimates for the desired computation at any point in time. These anytime estimates can be applied in machine learning and optimization as they can provide approximate gradients for these applications.

Our approach incorporates the use of polar codes for computation \cite{bartan2019straggler} and randomized sketching to achieve provable improvements in performance. Polar coding, a method for error-correcting code construction, has been shown to achieve the capacity of symmetric binary-input discrete memoryless channels \cite{polar2009arikan}. As a result, we propose a flexible distributed computing method that is highly robust against stragglers and able to provide accurate approximations when exact recovery fails due to the high number of stragglers. The formal statement of our main result could be found in Theorem \ref{thm:main_result}.

Serverless computing is a novel cloud-based computational model that enables users to execute computations within the cloud environment without the need for provisioning or managing servers. In order to ensure straggler resilience within this model, it is essential for the scheme to be scalable, as the number of nodes may vary greatly. As the number of worker nodes increases to the scale of hundreds or thousands, two considerations become particularly salient. The first one is that encoding and decoding of the code must be low complexity. The second one is that one must be careful with the numerical round-off errors if the inputs are not from a finite field, but instead are full-precision real numbers. To clarify this point, when the inputs are real-valued, encoding and decoding operations introduce round-off errors. Polar codes show superiority over many codes in terms of both of these aspects as we will show in the sequel. They have low encoding and decoding complexity and both encoding and decoding require only a small number of subtraction and addition operations without any multiplication operations. In addition, they are known to achieve channel capacity in communication (\cite{polar2009arikan}). The importance of this fact for coded computation is that we assume the outputs of worker nodes are analogous to binary erasure channels and thus the number of worker outputs needed for decoding is asymptotically optimal.

The distinction between serverless and server-based computing is crucial in understanding the utility of polar coding based approaches. Serverless computing requires a larger number of compute nodes to perform equivalent computation compared to server-based computing. This design consideration highlights the need for efficient encoding and decoding algorithms to address the increased computational requirements. For instance, for a distributed server-based system with $N=8$ nodes, using maximum distance separable (MDS) codes with decoding complexity as high as $O(N^3)$ can be computationally feasible. However, this becomes a limitation when transitioning to a serverless system, as the number of functions $N$ required to achieve the same level of computation may increase to several thousand. The reason for this is the limited resources such as low RAM and short lifetime that each function is restricted to have in serverless computing. Hence, it is imperative to employ codes with efficient decoding algorithms, such as polar codes. Despite lacking the MDS properties, polar codes exhibit a diminishing performance gap in terms of recovery threshold for large code block-lengths.
\subsection{Related Work} \label{subsec:related_work}

Coded matrix multiplication has been introduced in \cite{Lee2018} for speeding up distributed matrix-vector multiplication in server-based computing platforms. In \cite{Lee2018}, it was shown that it is possible to speed up distributed matrix multiplication by using MDS codes. MDS codes however have the disadvantage of having high encoding and decoding complexity, which could be restricting in setups with large number of workers. The work in \cite{baharav2018prodcodes} attacks this problem by introducing a coded computation scheme based on $d$-dimensional product codes. \cite{yu2017polycode} presents a scheme referred to as polynomial codes for coded matrix multiplication with input matrices from a large finite field. This approach might require quantization for real-valued inputs which could introduce additional numerical issues. \cite{dutta2018optimalrec} and \cite{yu2018straggler} are other works investigating coded matrix multiplication and provide analysis on the optimal number of worker outputs required. The polynomial code approach has been extended in \cite{yu2020entangled}, where a secure, private, batch distributed matrix multiplication scheme has been proposed. 

The matrix multiplication computation is partitioned into sequentially computed layers of varying precision in \cite{esfahanizadeh2022layered}. Coded computing is then applied to each precision layer to make the computation of each layer straggler-resilient. \cite{yang2019timely} considers the variability in the computation speed across the worker nodes. The state of each worker node is modeled using Markov chains and a dynamic computation strategy is developed. This work also has a coded computing aspect which is based on an MDS code. Our polar code approach could be applied to \cite{yang2019timely} to lower the complexity required for decoding in order to enable it for large-scale distributed computing. A common ingredient in both \cite{esfahanizadeh2022layered} and \cite{yang2019timely} is the use of finite field data and high complexity decoding (typically cubic complexity). Fast encoding and decoding procedures along with the approximation property of our method enable massive scale computing. \cite{jahani2023berrut} proposes an approximate coded computing algorithm with low complexity that act on real-valued data. Authors of \cite{jahani2023berrut} provide theoretical guarantees for the approximation quality. Our method not only exhibits similar desirable properties but also allows for fast exact decoding.

There are works in the literature on coded computation for gradient coding and different types of large-scale linear algebra operations such as \cite{wang19gradient_coding}, \cite{wang19coded_linear}, \cite{shashanka17lowrank}. Straggler mitigation in distributed computing with heterogeneous compute nodes is studied in \cite{reisizadeh17coded}. In addition to the coding theoretic approaches, \cite{gupta2018oversketch} presents an approximate straggler-resilient matrix multiplication scheme where sketching and straggler-resilient distributed matrix multiplication are combined.

Polar codes for coded computation is further studied in the work \cite{pilanci2021comppolarization}. This work analyzes the convergence properties of polarization of computation times.

Using Luby Transform (LT) codes, a type of rateless fountain codes, in coded computation has been proposed in \cite{severinson2018lt} and \cite{mallick2018rateless}. The proposed scheme in \cite{mallick2018rateless} divides the overall task into smaller tasks of multiplication of rows of $A$ with $x$ for better load-balancing. The work \cite{severinson2018lt} proposes the use of inactivation decoding and the work \cite{mallick2018rateless} uses peeling decoder. Peeling decoder has a computational complexity of $O(N\log N)$, however its performance is not satisfactory if the number of input symbols is not very large. Inactivation decoder performs better than the peeling decoder in terms of error correction, however, it is not as fast as the peeling decoder.

Several alternative methods were proposed for designing anytime algorithms in distributed computing, capable of generating estimates and exact solutions with sufficient time (\cite{miguel2016anytime, ferdinand2016anytime, ferdinand2017anytime}). However, it is not guaranteed that an algorithm that provides exact recovery is computationally efficient. Our use of a randomized version of polar codes, which have efficient decoding algorithms as well as strong concentration properties, makes this possible. Our approach is novel in that it enables efficient exact recovery while also offering anytime inexact solutions with theoretical guarantees as outlined in Theorem \ref{thm:main_result}.

Among the recent work on serverless computing for machine learning training is \cite{carreira2018caseserverless}, where serverless machine learning is discussed in detail, and challenges and possible solutions on serverless machine training are provided. Resource allocation and pricing aspects of serverless computing are investigated in \cite{gupta20utility}. Authors in \cite{feng2018serverlesstraining} consider an architecture with multiple master nodes and state that for small neural network models, serverless computing helps speed up hyperparameter tuning. Similarly, the work in \cite{wang2019serverlesslearning} shows via experiments that their prototype on AWS Lambda can reduce model training time greatly. A distributed convex optimization mechanism based on randomized second order optimization is proposed and studied in \cite{gupta20oversketched} for serverless computing.

We also investigate the application of black-box optimization methods for reinforcement learning. The work of \cite{salimans2017es} considers the evolution strategies method in reinforcement learning and show that distributed training with evolution strategies can be very fast because of its scalability. The work of \cite{choromanski2018structured} shows that using orthogonal exploration directions leads to lower errors and present the structured evolution strategies method which is based on a special way of generating random orthogonal exploration directions, as we discuss later in detail.
\subsection{Overview of Our Contributions} \label{subsec:contributions}

\begin{itemize}
    \item We introduce a novel approach for distributed computation of linear operations that is resistant to slow or ``straggler" nodes while also providing approximate solutions when exact decoding is not possible. The method unifies polar codes and randomized Hadamard sketches to achieve this goal. This allows for robustness and flexibility in computation and highly efficient fast decoding, making it a versatile solution for large scale linear operation computations.
    
    \item We present methods for coded computation and black-box optimization using polar codes. We develop efficient algorithms for encoding and decoding over real numbers.
    
    \item We have extended the previous results on polarization of computation times into kernels of arbitrary size. Arbitrary size kernels could be useful when the number of nodes is not a power of $2$.
    
    \item We present numerical results on large-scale data including ImageNet (\cite{imagenet}) that show the scalability of the proposed methods. We have implemented and tested the methods for the serverless computing service AWS Lambda.
\end{itemize}

\section{Coded Computation using Polar Codes} \label{sec:coded_comp}

\subsection{Problem Setup}
In this section, we will use the distributed computation of the matrix-vector multiplication operation $A x$ as a motivating example for the discussion. Suppose that $A \in \mathbb{R}^{n \times d}$ is a large data matrix partitioned to $s$ sub-matrices $A_i$ of size $\frac{n}{s}$-by-$d$ over its rows:
\begin{align}
    A = \begin{bmatrix} A_1 \\ \vdots \\ A_s \end{bmatrix} \mbox{, where } A_i \in \mathbb{R}^{
    \frac{n}{s} \times d}, \quad i = 1,\dots,s.
\end{align}
To keep the presentation simple, we will assume that $x\in \mathbb{R}^{d}$ is a vector of manageable size and is not partitioned or encoded. For a detailed discussion of the general setting where $x$ is a matrix and also encoded, the reader is referred to the appendix. Our goal is to compute exactly or approximately the product $A x$ using $N$ worker nodes that run in parallel. In our model, worker nodes are allowed to communicate only with the central node. We will assume the encoded data blocks are denoted as $\tilde{A}_i$:
\begin{align}
    \tilde{A} = \begin{bmatrix} \tilde{A}_1 \\ \vdots \\ \tilde{A}_N \end{bmatrix} , \mbox{ where } \tilde{A}_i \in \mathbb{R}^{\frac{n}{s}\times d}, \quad i=1,\dots,N.
\end{align}
The output of worker node $i$ is then $\tilde{A}_i x$. Figure \ref{fig:dist_computing} presents a visual representation of the computing model. When worker node $i$ finishes its assigned computation, we say that its output is available. The set of nodes whose outputs are available will be denoted as $\mathcal{S} \subseteq \{1,\dots,N\}$.  We will assume that the worker node outputs can be either ``unavailable" or ``available and correct". In other words, if an output is available, we will assume it is error-free. Consequently, the worker nodes can be modeled as real-valued erasure channels.

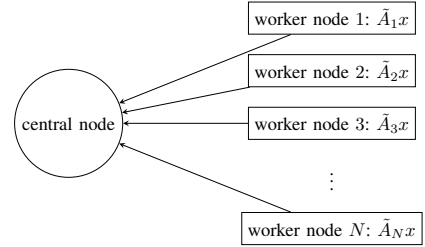
\begin{figure}
\centering
\scalebox{0.70}{
\begin{tikzpicture}[node distance={10mm}, main/.style = {draw, circle}, squarednode/.style={rectangle, draw=black, minimum size=4mm}, dotnode/.style={draw,shape=circle,fill=black,inner sep=0pt,minimum size=1.5mm}]

\node[squarednode] (1) {worker node $1$: $\tilde{A}_1x$};
\node[squarednode] (2) [below of=1] {worker node $2$: $\tilde{A}_2x$};
\node[squarednode] (3) [below of=2] {worker node $3$: $\tilde{A}_3x$};
\node[] (4) [below of=3] {$\vdots$};
\node[squarednode] (5) [below of=4] {worker node $N$: $\tilde{A}_Nx$};

\node[] (6) [left of=3] {};
\node[] (7) [left of=6] {};
\node[] (10) [left of=7] {};
\node[] (11) [left of=10] {};
\node[main] (9) [left of=11] {central node};

\draw[stealth-] (9) -- (1);
\draw[stealth-] (9) -- (2);
\draw[stealth-] (9) -- (3);
\draw[stealth-] (9) -- (5);

\end{tikzpicture}
}
\vspace{-2mm}
\caption{Distributed computing model.}
\label{fig:dist_computing}
\end{figure}

\subsection{Main Result}

Our primary contribution is a novel technique for introducing redundancy in computation that effectively eliminates the straggler effect in exact recovery and at the same time, provides approximation guarantees for the anytime estimates. Our scheme enables finding an unbiased estimator for the matrix-vector product $Ax$ that provides guaranteed approximation results when the set of available outputs is not sufficient for exact decoding of the result. An essential aspect of our analysis is the synthesis of two distinct areas of research: Polar codes and the concentration of measure for randomized Hadamard sketches.

The proposed method is summarized in Algorithm \ref{alg:main_alg}. Our encoder acts on the blocks $A_i$ of the data matrix $A$ and returns the encoded data blocks $\tilde{A}_i$. Details of the encoding procedure are given in Section \ref{sec:encoding}. The worker nodes are assigned tasks such that worker node $i$ computes the product $\tilde{A}_ix$. Then, the central node monitors the available node outputs and updates the set $\mathcal{S}$ and approximation of the computation accordingly. Once $\mathcal{S}$ becomes decodable (see Definition \ref{def:decodable}), the decoder recovers the desired result $Ax$ exactly. We will provide the details of the encoding and decoding algorithms in the sequel.

\begin{definition}[Decodable set] \label{def:decodable}
Let $\mathcal{S}$ denote the set of nodes that complete their assigned computation. We say $\mathcal{S}$ is decodable whenever it is possible to exactly recover the desired result from the outputs of nodes in $\mathcal{S}$ using our sequential decoding procedure (Algorithm \ref{decoding_alg}).
\end{definition}

\DontPrintSemicolon
\begin{algorithm}
 \KwIn{Data $A, x$}
 
 $\tilde{A} = $ encoder($A$) (see Section \ref{subsec:randpolarcode})\;
 Assign $\tilde{A}_ix$ to worker node $i$, $i=1,\dots,N$ \;
 
 Initialize $\mathcal{S} = \{\}$
 
 \While{$\mathcal{S}$ not decodable}{
 update $\mathcal{S}$ \;
 form the approximate anytime estimator $\mathcal{T}_x(\mathcal{S}) := \vect\left(\frac{1}{|\mathcal{S}|} \sum_{i \in \mathcal{S}} z_i(\tilde{A}_ix)^T \right)$\;
 }
 
 $Ax = $ decoder($\mathcal{S}$) (see Algorithm \ref{decoding_alg}) \;
 return $Ax$ (exact computation) \;
 \caption{Randomized polar coding based anytime distributed computing}
 \label{alg:main_alg}
\end{algorithm}

We defer the analysis of exact decodability via sequential decoding to Section \ref{sec:comp_times}. When $\mathcal{S}$ is not decodable, we will show that it is possible to construct approximate solutions using the estimator $\mathcal{T}_x(\mathcal{S})$ defined as
\begin{align} \label{eq:estimator}
    &\mathcal{T}_x(\mathcal{S}) := \vect\left(\frac{1}{|\mathcal{S}|} \sum_{i \in \mathcal{S}} z_i(\tilde{A}_ix)^T \right) \in \mathbb{R}^n
\end{align}
where $z_1,...,z_{|\mathcal{S}|} \in \mathbb{R}^s$ are column vectors corresponding to the $|\mathcal{S}|$ rows of the randomized polar code matrix $Z=[z_1,...,z_N]^T$ as defined in Section \ref{subsec:randpolarcode} that correspond to the nodes that complete their assigned computation. In the above formula, the $\vect(\cdot)$ operation vectorizes the $s\times\frac{n}{s}$ matrix $\frac{1}{|\mathcal{S}|} \sum_{i \in \mathcal{S}} z_i(\tilde{A}_ix)^T$ columnwise into a size-$n$ vector to provide an anytime estimator for $Ax \in \mathbb{R}^n$.

We now present our main result on the quality of the anytime estimator.

\begin{theorem}[Main result] \label{thm:main_result}
Suppose that the worker run times are independently distributed. Algorithm \ref{alg:main_alg} returns the matrix-vector product $Ax$ exactly for decodable\footnote{The distribution of the time at which exact computation is possible can be found in closed form as we show in Section \ref{sec:comp_times}} $\mathcal{S}$ in $O(N\log N)$ decoding time. In the case when $\mathcal{S}$ is not decodable, the approximation $\eqref{eq:estimator}$ is unbiased and for any fixed collection of vectors $x,x^\prime,c$ we have
\begin{align}
(1-\epsilon) \|A(x-x^\prime)\|_2^2 &\le \langle \mathcal{T}_x(\mathcal{S})-\mathcal{T}_{x^\prime}(\mathcal{S}), A(x-x^\prime) \rangle \nonumber \\
&\qquad \qquad \le (1+\epsilon)\|A(x-x^\prime)\|_2^2, \label{eq:xxprimeapproximation}
\end{align}
and
\begin{align}
-\epsilon (\|c\|_2^2+\|Ax\|_2^2) &\le 
 \langle c, \mathcal{T}_x(\mathcal{S})-Ax\rangle \le \epsilon(\|c\|_2^2+\|Ax\|_2^2), \label{eq:xxprimelinearapproximation}
\end{align}
with probability at least $1-s\exp{(-C_2|\mathcal{S}|)}$ when at least $C_1\log(N)^4/\epsilon^2$ workers finish their computation\footnote{$C_1,C_2$ are constants independent of dimensions.}. Here, $\mathcal{T}_x(\mathcal{S})$ and $\mathcal{T}_{x^\prime}(\mathcal{S})$ are approximations of $Ax$ and $Ax^\prime$ respectively. In addition, the estimator $\mathcal{T}_x(\mathcal{S})$ can be computed in $O(N\log N)$ time.
\end{theorem}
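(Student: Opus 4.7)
My plan is to split the theorem into four claims---the $O(N\log N)$ exact decoding, unbiasedness, the two approximation inequalities \eqref{eq:xxprimeapproximation}--\eqref{eq:xxprimelinearapproximation}, and the $O(N\log N)$ estimator complexity---and reduce all the approximation content to a single operator-norm concentration statement. Writing $\tilde{A}_i = \sum_{j=1}^{s} Z_{ij} A_j$ with $Z_{ij}=(z_i)_j$ and assembling the unencoded block outputs into the matrix $Y \in \mathbb{R}^{s \times (n/s)}$ whose $j$-th row is $(A_j x)^T$, a short rearrangement yields
\begin{align*}
\frac{1}{|\mathcal{S}|}\sum_{i\in\mathcal{S}} z_i (\tilde{A}_i x)^T \;=\; M_{\mathcal{S}}\, Y, \qquad M_{\mathcal{S}} \;:=\; \frac{1}{|\mathcal{S}|}\sum_{i\in\mathcal{S}} z_i z_i^T \in \mathbb{R}^{s\times s},
\end{align*}
so $\mathcal{T}_x(\mathcal{S}) = \vect(M_{\mathcal{S}}Y)$ with $\vect(Y)=Ax$, and the quality of the estimator is governed entirely by $M_{\mathcal{S}} - I_s$.

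For unbiasedness I would use the structure of the randomized polar code from Section \ref{subsec:randpolarcode}: the rows $z_i$ are obtained by applying a diagonal Rademacher matrix to a $\{\pm 1\}$-valued polar generator, which forces $\Exs[z_i z_i^T]=I_s$ entrywise. Linearity of expectation then yields $\Exs[M_{\mathcal{S}}]=I_s$ and $\Exs[\mathcal{T}_x(\mathcal{S})]=Ax$. The key analytic step is to establish $\|M_{\mathcal{S}} - I_s\|_{op} \leq \epsilon$ with probability at least $1 - s\exp(-C_2|\mathcal{S}|)$ once $|\mathcal{S}|\ge C_1\log(N)^4/\epsilon^2$. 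Because worker runtimes are independent and the rows of the randomized polar code are exchangeable by design, conditioning on $|\mathcal{S}|$ makes $\mathcal{S}$ act as a uniform sample, and I would apply a matrix Bernstein/Chernoff inequality to the zero-mean summands $z_iz_i^T - I_s$. The ``large'' parameter in that inequality is $\max_i \|z_i\|_2^2$, which for a polar code composed with a randomized Hadamard-type sketch is $O(\log^2 N)$; substituting this into Bernstein's bound produces the required sample size $|\mathcal{S}|\gtrsim \log(N)^4/\epsilon^2$, with the dimension factor $s$ appearing in front of the exponential.

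Given the spectral bound $\|M_{\mathcal{S}}-I_s\|_{op}\le \epsilon$, both inequalities follow by bookkeeping. For \eqref{eq:xxprimeapproximation}, let $Y_{\Delta}$ be the blocked version of $A\Delta$ with $\Delta=x-x'$; then
\begin{align*}
\langle \mathcal{T}_x(\mathcal{S}) - \mathcal{T}_{x'}(\mathcal{S}),\; A\Delta \rangle \;=\; \tr\bigl(Y_\Delta^T M_{\mathcal{S}} Y_\Delta\bigr),
\end{align*}
and since $\tr(Y_\Delta^T Y_\Delta)=\|A\Delta\|_2^2$, the spectral bound delivers the claimed sandwich. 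For \eqref{eq:xxprimelinearapproximation}, reshape $c$ into $C$ and write $\langle c, \mathcal{T}_x(\mathcal{S}) - Ax\rangle = \tr(C^T(M_{\mathcal{S}}-I_s)Y)$; Cauchy--Schwarz followed by the AM--GM inequality $|ab|\le \frac12(a^2+b^2)$ absorbs the product into $\epsilon(\|c\|_2^2+\|Ax\|_2^2)$. The $O(N\log N)$ exact decoding is inherited from the successive-cancellation decoder in Algorithm \ref{decoding_alg} applied blockwise, and $\mathcal{T}_x(\mathcal{S})$ itself is computed in $O(N\log N)$ by exploiting the fast polar/Hadamard transform to apply the $\mathcal{S}$-restricted version of $Z^T$ to the $N$-vector of worker outputs.

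The main obstacle will be the matrix-concentration step: I need the sampling induced by independent but possibly non-identical runtimes to be compatible with matrix Bernstein, which typically assumes a uniform or Bernoulli sample, and I need a clean bound on $\max_i\|z_i\|_2$ for the specific construction in Section \ref{subsec:randpolarcode}. Reducing the runtime-driven sampling model to a standard one via the symmetry/exchangeability of the randomized polar construction, and pinning down the row-norm bound, are the two quantitative sticking points; the remainder of the argument is then a straightforward trace manipulation.
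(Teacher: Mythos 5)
Your reduction of the estimator to $\mathcal{T}_x(\mathcal{S})=\vect(M_{\mathcal{S}}Y)$ with $M_{\mathcal{S}}=\frac{1}{|\mathcal{S}|}\sum_{i\in\mathcal{S}}z_iz_i^T$, the unbiasedness argument, and the complexity claims all match the paper. However, your key analytic step --- establishing $\|M_{\mathcal{S}}-I_s\|_{op}\le\epsilon$ via matrix Bernstein --- is a genuine gap, and in fact cannot work in the regime the theorem addresses. The matrix $M_{\mathcal{S}}$ is a sum of $|\mathcal{S}|$ rank-one terms, so $\rank(M_{\mathcal{S}})\le|\mathcal{S}|$; whenever $|\mathcal{S}|<s$ (precisely the interesting case of few finished workers, since the threshold $C_1\log(N)^4/\epsilon^2$ is meant to be far below $s$), $M_{\mathcal{S}}$ has a nontrivial kernel and $\|M_{\mathcal{S}}-I_s\|_{op}\ge 1$. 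Even when $|\mathcal{S}|\ge s$, your row-norm estimate $\max_i\|z_i\|_2^2=O(\log^2 N)$ is incorrect for this construction: the rows of $HDR$ (with the normalization forcing $\Exs[z_iz_i^T]=I_s$) have squared norm exactly $s$, so matrix Bernstein yields a sample requirement scaling like $s\log(s)/\epsilon^2$, not $\log(N)^4/\epsilon^2$. The underlying issue is that an operator-norm bound is a \emph{for-all-vectors} (subspace embedding) guarantee, whereas the theorem only claims a \emph{for-each-fixed-vector} guarantee ("for any fixed collection of vectors $x,x^\prime,c$").

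The paper's proof respects this distinction: it identifies each column of $M_{\mathcal{S}}[A_1x,\dots,A_sx]$ with an SRHT sketch $S_H^TS_H$ applied to the fixed vectors $A_ix$, invokes the fixed-vector Johnson--Lindenstrauss property of the subsampled randomized Hadamard transform (Theorem \ref{thm:jl_lemma}, following \cite{krahmer2011new}) which holds with $m\ge O(\log(N)^4/\epsilon^2)$ and failure probability $\exp(-mC_1)$ per vector, takes a union bound over the $s$ blocks (this is the source of the factor $s$ in front of the exponential), and sums the resulting scalar inequalities to get \eqref{eq:xxprimeapproximation}. The bilinear bound \eqref{eq:xxprimelinearapproximation} then follows from the polarization identity $u^TQu^\prime=\frac{1}{2}\bigl((u+u^\prime)^TQ(u+u^\prime)-u^TQu-{u^\prime}^TQu^\prime\bigr)$ applied to $Q=S_H^TS_H-I$, rather than from Cauchy--Schwarz against an operator-norm bound. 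To repair your argument you would need to replace the spectral concentration lemma with this per-vector JL statement; the trace bookkeeping you describe afterwards would then go through essentially unchanged.
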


The proof is presented in Sections \ref{subsec:proof} and \ref{subsec:exactrec}. The bound \eqref{eq:xxprimelinearapproximation} shows that the inner-product $c^TAx$ can be replaced with $c^T \mathcal{T}_x(S)$ with small error for any fixed vector $c$. A corollary of the approximation in \eqref{eq:xxprimeapproximation} is that $1-\epsilon\le \langle \mathcal{T}_x(\mathcal{S}), Ax \rangle \le 1+\epsilon$ for any $x$ satisfying the normalization $\|Ax\|_2=1$. The main significance of Theorem \ref{thm:main_result} is that it shows there exists a strategy of encoding data that allows for unbiased estimates at anytime and eventual exact recovery. Moreover, the complexity of encoding and decoding with exact 
recovery, and also approximate recovery is only $O(N\log N)$, which is significantly faster than standard computational codes \cite{Lee2018}.

\subsection{Encoding}
\label{sec:encoding}
Here we describe the randomized polar coding construction underlying our computational scheme. We will use $U_1,\dots,U_N$ to denote the input data blocks to the encoder. Some of the inputs will be set to zero and others will be set to the data blocks $A_i$. This is to control the redundancy of the code, analogous to the frozen and data bits in traditional polar codes. 

Consider the construction given in Figure \ref{fig:circuit_fig_N_2} for $N=2$ data blocks. The rectangles $M_1$ and $M_2$ correspond to worker nodes $1$ and $2$. The input blocks $U_1$ and $U_2$ are first multiplied by the diagonal matrix $D=\Diag(D_1, D_2)$ where the diagonal entries $D_i$ are sampled from a scalar Rademacher distribution with $\pm 1$ i.i.d. entries. This is then input to the Hadamard kernel. The outputs are the encoded data blocks and equal to $D_1U_1+D_2U_2$ and $D_1U_1-D_2U_2$. The worker nodes are tasked with multiplying these encoded blocks by $x$. The variables $D_i$ are generated only once during the construction of the code and kept fixed during the decoding.

\begin{remark}
Note that the encoding shown in Figure \ref{fig:circuit_fig_N_2} is for the Hadamard kernel $\bigl[\begin{smallmatrix} 1 & ~\,1 \\ 1 & -1 \end{smallmatrix}\bigr]$, whereas the polar code kernel is given by $\bigl[\begin{smallmatrix} 1 & ~1 \\ 0 & ~1 \end{smallmatrix}\bigr]$. Encoding and exact decoding procedures work  similarly for both kernels. However, approximate recovery is only possible via the Hadamard kernel due to the connection to randomized Hadamard sketches as shown in the sequel.
\end{remark}

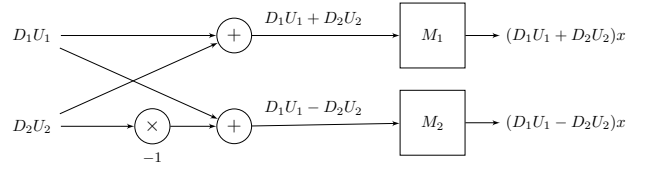
\begin{figure}
\centering
\scalebox{0.74}{
\begin{tikzpicture}[node distance=\nodedistance,auto,>=latex', scale = 0.83, transform shape]
    \tikzstyle{line}=[draw, -latex']
    \node [int] (M1) { $M_1$ };
    \node [int, below=0.47 of M1] (M2) { $M_2$ };
    \node [ADD,left=3.2 of M1] (ADD1) {\large $+$};
    \node [ADD,below=1.2 of ADD1] (ADD2) {\large $+$};
    \node [ADD,left=1 of ADD2] (NEG) {\large $\times$};
    \node [below=0.05 of NEG] (MINUS) {\small $-1$};
    \node [above=0.1 of $(ADD1)!0.4!(M1)$] (MINUS) {$D_1U_1+D_2U_2$};    
    \node [above=0.1 of $(ADD2)!0.4!(M2)$] (MINUS) {$D_1U_1-D_2U_2$};
    \node [right=0.75 of M1] (f1) {$(D_1U_1 + D_2U_2)x$};      
    \node [right=0.75 of M2] (f2) {$(D_1U_1 - D_2U_2)x$};      
    \node [left of=ADD1,node distance=5.0cm,text width=1cm,anchor=west,align=center] (A1) {$D_1 U_1$};
    \node [left of=ADD2,node distance=5.0cm,text width=1cm,anchor=west,align=center] (A2) {$D_2U_2$};
    
    \path[line] (ADD1) edge (M1)
                (ADD2) edge (M2)
                (A1) edge (ADD1)
                (A2) edge (NEG)
                (NEG) edge (ADD2)
                (A2) edge (ADD1)
                (A1) edge (ADD2)
                (M1) edge (f1)
                (M2) edge (f2);
\end{tikzpicture}
}
\vspace{-3mm}
\caption{Randomized polar code construction for $N=2$.}
\label{fig:circuit_fig_N_2}
\end{figure}

Larger size constructions can be obtained by recursively applying the size $2$ constructions. For instance, the construction for $N=4$ is shown in Figure \ref{fig:circuit_fig_N_4}. The recursive construction enables fast encoding and decoding operations, making it suitable for large-scale computing.

We now discuss the procedure for determining which inputs $U_i$ to freeze, i.e. set to zero, and which ones to send in data blocks. Let us denote the erasure probability of each worker node by $\epsilon$, and assume that the erasures are independent. The calculation of the erasure probabilities for the transformed nodes is similar to traditional polar codes, which we show with the example of $N=2$: Suppose that a sequential decoder in the first stage recovers $D_1U_1$ from the outputs of $M_1$ and $M_2$ in Figure \ref{fig:circuit_fig_N_2}, and then recovers $D_2U_2$ given $D_1U_1$ in the second stage. The probability that the first stage fails is given by
\begin{align*}
&\mathbb{P}[\textrm{output $M_1$ is erased} \textbf{ or } \textrm{output $M_2$ is erased} ]  = \\
&\qquad = F_+(\epsilon):= 1-(1-\epsilon)^2,
\end{align*}
since any erasure makes the recovery of $D_1U_1$ impossible.
On the other hand, the probability that the second stage fails is given by
\begin{align*}
&\mathbb{P}[\textrm{output $M_1$ is erased} \textbf{ and } \textrm{output $M_2$ is erased} ] = \\
&\qquad = F_-(\epsilon):= \epsilon^2,
\end{align*}
since $D_2U_2$ can be recovered either from the output of $M_1$ or $M_2$ given the knowledge of $D_1U_1$.
For larger construction sizes, the above calculation can be extended recursively. For $N=4$, the erasure probabilities of the transformed nodes are given by $$\Big\{F_+(F_+(\epsilon)), F_+(F_-(\epsilon)), F_-(F_+(\epsilon)), F_-(F_-(\epsilon))\Big\},$$ (see e.g., \cite{bartan2019straggler}).
Based on the erasure probabilities of the transformed nodes, we select the best ones for data, and freeze the rest (i.e., set to zero matrices). These transformed nodes are analogous to virtual channels in polar coding for communication \cite{polar2009arikan}.

After computing the erasure probabilities for the transformed nodes, we choose the $N (1-\epsilon)$ nodes with the lowest erasure probabilities as data nodes. The remaining $N\epsilon$ nodes are frozen. For example, for $N=4$ and $\epsilon = 0.5$, the erasure probabilities of the transformed nodes are calculated to be $\{0.938, 0.563, 0.438, 0.063\}$. It follows that we freeze the first two inputs, and the last two inputs are set to data blocks. This means that in Figure \ref{fig:circuit_fig_N_4}, we set $U_1=U_2=0^{n/2\times d}$ and $U_3=A_1$, $U_4=A_2$.

Note that unlike the XOR operation of polar codes in binary communication channels, in this work we consider real numbers and linear polarizing transformations over real numbers. As it will be shown in the sequel, the channel polarization phenomenon in finite fields carries over to the reals in an analogous manner. We note that the encoding procedure has computational complexity $O(N\log N)$. This can be seen by observing that there are $\log_2 N + 1$ vertical levels in the code construction and $N$ nodes in every level.

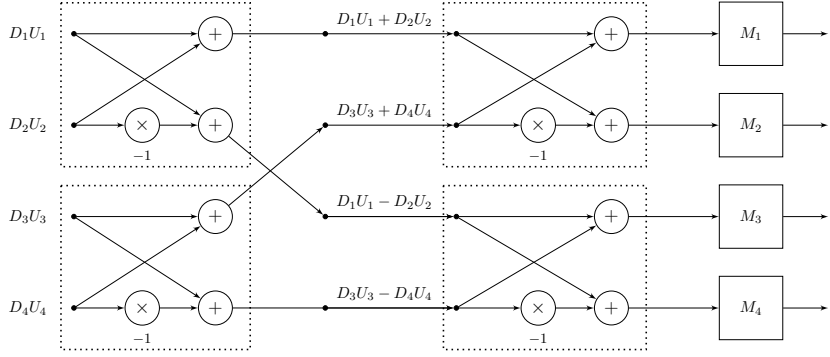
\begin{figure*}
\centering
\scalebox{0.75}{
\begin{tikzpicture}[node distance=\nodedistance,auto,>=latex', scale = 0.8, transform shape]
    \tikzstyle{line}=[draw, -latex']

    \node [ADD] (ADD1) {\large $+$};
    \node [ADD,below=1.26 of ADD1] (ADD2) {\large $+$};
    \node [ADD,left=0.85 of ADD2] (NEG) {\large $\times$};
    \node [ADD,below=1.26 of ADD2] (ADD3) {\large $+$};
    \node [ADD,below=1.26 of ADD3] (ADD4) {\large $+$};
    \node [ADD,left=0.85 of ADD4] (NEG2) {\large $\times$};    
    \node [below=0.05 of NEG] (MINUS) {\small $-1$};
    \node [below=0.05 of NEG2] (MINUS) {\small $-1$};
    \node [POINT, right=2 of ADD1] (S12) {};
    \node [POINT, right=2 of ADD2] (S22) {};
    \node [POINT, right=2 of ADD3] (S32) {};
    \node [POINT, right=2 of ADD4] (S42) {};
    \node [POINT, right=2.8 of S12] (S1) {};
    \node [POINT, right=2.8 of S22] (S2) {};
    \node [POINT, right=2.8 of S32] (S3) {};
    \node [POINT, right=2.8 of S42] (S4) {};
    \node [ADD, right=3 of S1] (ADD12) {\large $+$};
    \node [ADD, right=3 of S2] (ADD22) {\large $+$};
    \node [ADD, right=3 of S3] (ADD32) {\large $+$};
    \node [ADD, right=3 of S4] (ADD42) {\large $+$};
    \node [ADD,left=0.85 of ADD22] (NEG21) {\large $\times$};
    \node [ADD,left=0.85 of ADD42] (NEG22) {\large $\times$};
    \node [below=0.05 of NEG21] (MINUS) {\small $-1$};
    \node [below=0.05 of NEG22] (MINUS) {\small $-1$};
    \node [POINT,left=2.7 of ADD1] (A1p) {};
    \node [POINT,left=2.7 of ADD2] (A2p) {};
    \node [POINT,left=2.7 of ADD3] (A3p) {};
    \node [POINT,left=2.7 of ADD4] (A4p) {};
    \node [left of=ADD1,node distance=4.7cm,text width=0.5cm,anchor=west,align=center] (A1) {$D_1U_1$};
    \node [left of=ADD2,node distance=4.7cm,text width=0.5cm,anchor=west,align=center] (A2) {$D_2U_2$};
    \node [left of=ADD3,node distance=4.7cm,text width=0.5cm,anchor=west,align=center] (A3) {$D_3U_3$};
    \node [left of=ADD4,node distance=4.7cm,text width=0.5cm,anchor=west,align=center] (A4) {$D_4U_4$};
    \node [box, right=2 of ADD12] (M1) {$M_1$};    
    \node [box, right=2 of ADD22] (M2) {$M_2$};    
    \node [box, right=2 of ADD32] (M3) {$M_3$};    
    \node [box, right=2 of ADD42] (M4) {$M_4$};    
    \node [right=1 of M1] (f1) {};      
    \node [right=1 of M2] (f2) {};
    \node [right=1 of M3] (f3) {};      
    \node [right=1 of M4] (f4) {};
    \node [above=0.05 of $(S12)!0.45!(S1)$] (OUT1) {$D_1U_1+D_2U_2$};
    \node [above=0.05 of $(S22)!0.45!(S2)$] (OUT2) {$D_3U_3+D_4U_4$};
    \node [above=0.05 of $(S32)!0.45!(S3)$] (OUT3) {$D_1U_1-D_2U_2$};
    \node [above=0.05 of $(S42)!0.45!(S4)$] (OUT4) {$D_3U_3-D_4U_4$};
    \draw[thick,dotted]     ($(S1.north west)+(-0.25,0.65)$) rectangle ($(ADD22.south east)+(0.5,-0.65)$);
    \draw[thick,dotted]     ($(S3.north west)+(-0.25,0.65)$) rectangle ($(ADD42.south east)+(0.5,-0.65)$);
    \draw[thick,dotted]     ($(A1p.north west)+(-0.25,0.65)$) rectangle ($(ADD2.south east)+(0.5,-0.65)$);
    \draw[thick,dotted]     ($(A3p.north west)+(-0.25,0.65)$) rectangle ($(ADD4.south east)+(0.5,-0.65)$);
    \path[line] (ADD1) edge (S1)
                (A1p) edge (ADD1)
                (A2p) edge (NEG)
                (NEG) edge (ADD2)
                (A2p) edge (ADD1)
                (A1p) edge (ADD2)
                (NEG2) edge (ADD4)
                (A3p) edge (ADD3)
                (A4p) edge (NEG2)
                (A3p) edge (ADD4)
                (A4p) edge (ADD3)
                (S1) edge (ADD12)
                (S2) edge (NEG21)
                (NEG21) edge (ADD22)
                (NEG22) edge (ADD42)
                (S3) edge (ADD32)
                (S4) edge (NEG22)
                (M1) edge (f1)
                (M2) edge (f2)
                (M3) edge (f3)
                (M4) edge (f4)
                (S1) edge (ADD22)
                (S2) edge (ADD12)
                (S3) edge (ADD42)
                (S4) edge (ADD32)
                (ADD2) edge (S32)
                (ADD3) edge (S22)
                (S22) edge (S2)
                (S32) edge (S3)
                (S32) edge (S3)
                (S42) edge (S4)
                (ADD4) edge (S4)
                (ADD12) edge (M1)
                (ADD22) edge (M2)
                (ADD32) edge (M3)
                (ADD42) edge (M4);
\end{tikzpicture}
}
\vspace{-2mm}
\caption{Randomized polar code construction for $N=4$.}
\label{fig:circuit_fig_N_4}
\end{figure*}

\subsubsection{Randomized Polar Code Construction}
\label{subsec:randpolarcode}
We define the following randomized linear code
\begin{align}
    Z := HDR\,,
\end{align}
where $H$ is the $N\times N$ Hadamard matrix, $D\in \mathbb{R}^{N\times N}$ is a diagonal matrix containing uniform $\pm 1$ Rademacher random variables, and $R\in\mathbb{R}^{N\times s}$ is a $0$--$1$ matrix whose certain rows are set to zeros. The matrix $R$ pads zero entries to frozen data locations, which are determined according to polarized erasure probabilities. The matrix $R$ can be constructed by taking an $s\times s$ dimensional identity matrix and padding all-zero rows at the frozen data indices.

%
%
For the case of $s=n$, the encoded data is defined as $\tilde A := Z A$. For the general case of matrix data $A\in\mathbb{R}^{n\times d}$, we partition $A$ into blocks of size $\frac{n}{s} \times d$ as $A=[A_1;\dots;A_s]$, where the semicolons indicate that blocks are stacked vertically. 
The encoded data is defined as
\begin{align}
    \tilde{A} &= \mathrm{encode}(A):= [\tilde{A}_1; \dots; \tilde{A}_N] \in \mathbb{R}^{Nn/s \times d} \mbox{, where} \nonumber \\
    &[\vect{(\tilde{A}_1)} , \dots , \vect{(\tilde{A}_N)}]^T := Z[\vect{(A_1)}, \dots, \vect{(A_s)}]^T \,,
\end{align}
where $\vect(\cdot)$ is the columnwise vectorization operator. In the case of the encoded matrix-vector product task, the workers are assigned to compute $\tilde A x$. The encoding function can be equivalently rewritten using the Kronecker product as follows:
\begin{align}
    \tilde{A} &= \mathrm{encode}(A):= (Z \otimes I_{n/s}) A \,,
\end{align}
where $I_{n/s}$ is the $(n/s)\times(n/s)$ dimensional identity matrix.
\subsection{Approximate Computation}

The Hadamard matrix is also used for dimension reduction in randomized approximate algorithms, including the well-known Subsampled Randomized Hadamard Transform (SRHT) \cite{Mahoney11,tropp2011improved}. Until this work, the connection between polar codes and SRHT has not been understood. SRHT is constructed as $S_{H}:=\frac{1}{\sqrt{m}}PHD \in \mathbb{R}^{m\times N}$ where $P \in \mathbb{R}^{m\times N}$ is a $0-1$ row-sampling matrix that picks $m$ rows uniformly at random, $H \in \mathbb{R}^{N\times N}$ is the Hadamard matrix with orthonormal columns, and $D \in \mathbb{R}^{N\times N}$ is a diagonal matrix with diagonal entries sampled i.i.d. from the Rademacher distribution, $D_{ii} = -1 \mbox{ or } 1$ with probability $1/2$. For $m < N$, the sketched data matrix $S_{H}A=\frac{1}{\sqrt{m}}PHDA \in \mathbb{R}^{m\times d}$ can be used as an approximate low dimension version of the data matrix.

We can rewrite the expression for the estimator given in \eqref{eq:estimator}  equivalently as follows:
\begin{align}
    \mathcal{T}_x(\mathcal{S}) &= \vect\left(\frac{1}{|\mathcal{S}|} \sum_{i \in \mathcal{S}} z_i z_i^T [ A_1x, ..., A_s x] \right) \nonumber \\
    &=\vect\left(  \frac{1}{|
    \mathcal{S}|} DH P^T P H D [ A_1x, ..., A_s x] \right) \nonumber \\
    & = 
    \left[ \begin{array}{c}  S_{H}^T S_{H} A_1x\\ \hdots \\  S_{H}^T S_{H}  A_s x \end{array} \right] 
\end{align}
where $P\in \mathbb{R}^{m\times N}$, $m=|\mathcal{S}|$ is a uniform row subsampling matrix that encodes the $|\mathcal{S}|$ workers that finished computation. Therefore, each column of $DH P^T P H D [ A_1x, ..., A_s x]$ is identical to an SRHT sketch applied simultaneously to the $s$ blocks of the desired matrix product $A_1x, ..., A_s x$, where the sketch size equals to $|\mathcal{S}|$.

As the first step of showing that the estimator provides good approximations for the true result, we state Lemma \ref{lem:unbiasedness}, which shows that the approximate results that we obtain using the estimator are unbiased estimates.

\begin{lemma}[Unbiasedness] \label{lem:unbiasedness}
Suppose that the worker job completion times are i.i.d. Then, the estimator $\mathcal{T}_x(\mathcal{S})$ provides unbiased estimates of the true result, i.e., $\Exs[\mathcal{T}_x(\mathcal{S})]=Ax$, where the randomness of the expectation is with respect to the diagonal Rademacher matrix $D$ and  the randomness in the job completion times of the workers.
\end{lemma}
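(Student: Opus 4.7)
The plan is to reduce the claim to the single identity $\Exs\!\left[\frac{1}{|\mathcal{S}|}\sum_{i\in\mathcal{S}} z_i z_i^T\right] = I_s$, after which the lemma follows immediately by plugging into the rewriting
\begin{align*}
\mathcal{T}_x(\mathcal{S}) = \vect\!\left(\tfrac{1}{|\mathcal{S}|}\sum_{i\in\mathcal{S}} z_i z_i^T\,[A_1x,\dots,A_sx]\right)
\end{align*}
that the excerpt has already established, using linearity of $\Exs[\cdot]$ and $\vect(\cdot)$.

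To handle the joint randomness, I would first condition on $|\mathcal{S}|=m$ and on $D$. Since the worker completion times are i.i.d. and independent of the Rademacher diagonal $D$ (which is drawn once, at code construction), the set $\mathcal{S}$ is, conditionally on its cardinality, uniformly distributed over the $\binom{N}{m}$ subsets of $\{1,\dots,N\}$, and is independent of $D$. Therefore $\mprob(i\in\mathcal{S}\mid |\mathcal{S}|=m,D)=m/N$ for every $i$, and by linearity
\begin{align*}
\Exs_{\mathcal{S}}\!\left[\tfrac{1}{m}\sum_{i\in\mathcal{S}} z_iz_i^T\,\Big|\,|\mathcal{S}|=m,\,D\right] = \tfrac{1}{N}\sum_{i=1}^{N} z_i z_i^T = \tfrac{1}{N}\,Z^T Z.
\end{align*}

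Next I would compute $Z^TZ$ deterministically. Using $Z=HDR$,
\begin{align*}
Z^T Z = R^T D^T H^T H D R = N\,R^T D^2 R = N\,R^T R = N\,I_s,
\end{align*}
where I use $H^TH=NI_N$ for the $N\times N$ Hadamard matrix, $D^2=I_N$ because the diagonal entries of $D$ are $\pm 1$, and $R^TR=I_s$ because $R$ is constructed by padding the $s\times s$ identity with all-zero rows at the frozen positions. Combining the two displays gives $\Exs_{\mathcal{S}}[\tfrac{1}{m}\sum_{i\in\mathcal{S}} z_iz_i^T\mid |\mathcal{S}|=m,D]=I_s$, and since the right side is non-random in both $|\mathcal{S}|$ and $D$, the full (unconditional) expectation is also $I_s$.

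Plugging this back, $\Exs[\mathcal{T}_x(\mathcal{S})]=\vect([A_1x,\dots,A_sx])=Ax$, finishing the argument. The only genuinely delicate point is the independence justification in the first step: one must be careful that the random stopping set $\mathcal{S}$, although adapted to the computation, is symmetric in the worker indices because the completion times are i.i.d., and in particular independent of $D$ which is sampled at encoding time. Everything else is linear algebra and linearity of expectation; no concentration or anti-concentration arguments are needed here (those are reserved for the variance-type bounds in Theorem~\ref{thm:main_result}).
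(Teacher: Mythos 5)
Your proof is correct and follows essentially the same route as the paper: both reduce the claim to the identity $\Exs\bigl[\tfrac{1}{|\mathcal{S}|}\sum_{i\in\mathcal{S}}z_iz_i^T\bigr]=I_s$ and obtain it from the orthogonality of $HD$ together with $R^TR=I_s$. The only (cosmetic) difference is that you treat $\mathcal{S}$ as a uniformly random subset given its cardinality (sampling without replacement) and make the conditioning on $|\mathcal{S}|$ and the independence from $D$ explicit, whereas the paper models the row-sampling matrix $P$ as i.i.d.\ with replacement; both yield $\Exs[P^TP]\propto I$ and the same conclusion.
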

\begin{proof}
Consider the expectation of the estimator $\mathcal{T}_x(\mathcal{S})$ where the randomness is with respect to the randomness of $D$:
\begin{align}
    &\Exs[\mathcal{T}_x(\mathcal{S})] = \Exs \vect\left(\frac{1}{|\mathcal{S}|} DH P^T P H D [ A_1x, ..., A_s x] \right) \nonumber \\
    & = \vect\left(\frac{1}{|\mathcal{S}|} \sum_{i \in \mathcal{S}} \Exs_D\left[ \frac{1}{|\mathcal{S}|} DH \Exs_P[P^T P] H D \right] [ A_1x, ..., A_s x] \right) \nonumber \\
    & = \vect\left([ A_1x, ..., A_s x] \right) =Ax\,.
\end{align}
Note that the third equality follows from the fact $\Exs_P[P^TP]= |\mathcal{S}| I$ due to the i.i.d. distribution of worker job completion times, since the matrix $P$ is a row-sampling matrix whose each row is sampled i.i.d. with replacement. Finally, $H^2=I$ and $\Exs D^2 = I$.
\end{proof}

\subsubsection{Analysis of the anytime estimator}
\label{subsec:proof}
Now we give a proof of the anytime estimation guarantees presented in Theorem \ref{thm:main_result}. 
By the linearity of the map $\mathcal{T}_{x}(S)$ with respect to $x$, we have $\mathcal{T}_{x}(S)-\mathcal{T}_{x^\prime}(S)=\mathcal{T}_{x-x^\prime}(S)$. Thus, we need to prove
\begin{align*}
(1-\epsilon) \|A(x-x^\prime)\|_2^2  &\le \langle \mathcal{T}_{x-x^\prime}(S), A(x-x^\prime) \rangle \\
&\qquad \qquad \le (1+\epsilon) \|A(x-x^\prime)\|_2^2.
\end{align*}
Without loss of generality, we may assume $x^\prime=0$ and it suffices to show
\begin{align*}
(1-\epsilon) \|Ax\|_2^2  \le \langle \mathcal{T}_{x}(S), Ax \rangle \le (1+\epsilon) \|Ax\|_2^2,
\end{align*}
for a fixed vector $x$.

Next, we present the following result on the Johnson-Lindenstrauss (JL) property of the map $HD$ after random erasures, which is a consequence of the analysis of SRHT from the sketching literature \cite{krahmer2011new}:
\begin{theorem}\label{thm:jl_lemma}
Suppose that $P\in \mathbb{R}^{m\times N}$ is an i.i.d. row sampling matrix, where $m\ge O(\log(N)^4)/\epsilon^2$. Then, for any fixed vector $u$, it holds with probability at least $1-\exp(-mc_1)$ that
\begin{align*}
    (1-\epsilon) \|u\|_2^2 \le \|PHDu\|_2^2 \le (1+\epsilon) \|u\|_2^2 \,,
\end{align*}
where $C_1$ is a fixed positive constant.
\end{theorem}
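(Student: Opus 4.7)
The plan is to follow the standard two-step SRHT template of Ailon--Liberty and Krahmer--Ward. I would first exploit the Rademacher diagonal $D$ to \emph{flatten} the target vector, showing that $v := HDu$ has small $\ell_\infty$ norm while preserving its $\ell_2$ norm. I would then use the randomness of the row-sampler $P$ to show that $\|Pv\|_2^2$ concentrates sharply around its conditional mean via a Bernstein-type bound on i.i.d.\ bounded summands.

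\textbf{Step 1 (flattening).} For each $i \in [N]$, the coordinate $v_i = (HDu)_i = \sum_j H_{ij} D_{jj} u_j$ is a Rademacher sum whose subgaussian parameter is at most $\|u\|_2/\sqrt{N}$ under the normalization $H^T H = I$. Hoeffding's inequality plus a union bound over $i \in [N]$ gives, for any $t > 0$,
\begin{align*}
\mprob_D\!\left(\|HDu\|_\infty > C \|u\|_2 \sqrt{(\log N + t)/N}\right) \le 2 e^{-t}.
\end{align*}
Choosing $t = \Theta(m)$ and invoking $m \gtrsim \log^4(N)/\epsilon^2 \gtrsim \log N$, I obtain an event $\mathcal{E}_1$ of probability at least $1 - e^{-c_1 m}$ on which $\|v\|_\infty \le M$, with $M^2 \lesssim \|u\|_2^2 \operatorname{polylog}(N)/N$.

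\textbf{Step 2 (sampling).} Write $\|Pv\|_2^2 = \sum_{k=1}^m v_{i_k}^2$, where $i_1,\dots,i_m \in [N]$ are i.i.d.\ uniform. Conditional on $v$, the summands are bounded by $M^2$, have common mean $\|v\|_2^2/N = \|u\|_2^2/N$ (by orthogonality of $HD$), and have variance at most $M^2\,\|u\|_2^2/N$. Bernstein's inequality then yields
\begin{align*}
\mprob_P\!\left( \big|\|Pv\|_2^2 - \tfrac{m}{N}\|u\|_2^2\big| > \epsilon\,\tfrac{m}{N}\|u\|_2^2 \,\Big|\, v \right) \le 2 \exp\!\left(-c_2\,\tfrac{m\epsilon^2}{NM^2/\|u\|_2^2}\right).
\end{align*}
Substituting the flatness bound $NM^2/\|u\|_2^2 \lesssim \operatorname{polylog}(N)$ makes the exponent at least $c_2 m \epsilon^2/\operatorname{polylog}(N)$, which the hypothesis $m \ge C_1 \log^4(N)/\epsilon^2$ upgrades to at least $c_2 m$. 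Combining this with $\mathcal{E}_1$ via a union bound, and absorbing the $N/m$ scaling into the normalization implicit in the statement, gives the announced two-sided bound with total failure probability at most $\exp(-c_1 m)$.

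\textbf{Main obstacle.} The conceptual steps are standard; the real work is bookkeeping. The key subtlety is that the flatness event must hold with probability $1 - \exp(-c_1 m)$ rather than merely $1 - 1/\operatorname{poly}(N)$; this forces the flatness bound $M^2$ to carry an extra $\operatorname{polylog}(N)$ factor, and it is precisely the interplay between this factor and the use of Bernstein (rather than Hoeffding) --- which invokes the variance $M^2\|u\|_2^2/N$ instead of the crude $M^4$ --- that produces the $\log^4(N)/\epsilon^2$ sample complexity. Reconciling the scaling between $\|PHDu\|_2^2$ and $\|u\|_2^2$ with the specific normalizations of $P$ and $H$ used elsewhere in the paper is also something I would verify carefully, since as written $P$ has no explicit $1/\sqrt{m}$ prefactor.
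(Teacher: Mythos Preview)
The paper does not prove this statement; it is invoked as a known SRHT concentration result from \cite{krahmer2011new} and used as a black box in the proof of Theorem~\ref{thm:main_result}. Your flatten-then-subsample plan is precisely the standard template behind such results, so at the level of strategy you are aligned with what the citation supplies.

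That said, Step~1 contains a genuine arithmetic inconsistency, not merely bookkeeping. Taking $t=\Theta(m)$ in the Hoeffding-plus-union bound gives $M^2 \lesssim \|u\|_2^2(\log N + m)/N \asymp m\,\|u\|_2^2/N$ once $m\gtrsim \log N$, \emph{not} $\operatorname{polylog}(N)\,\|u\|_2^2/N$ as you claim. Substituting $NM^2/\|u\|_2^2 \asymp m$ into your Bernstein exponent $c_2\,m\epsilon^2/(NM^2/\|u\|_2^2)$ collapses it to $\Theta(\epsilon^2)$, so the sampling step yields only a constant failure probability, not $e^{-c_2 m}$. Going the other way---keeping $t=\Theta(\log N)$ so that $M^2\lesssim \|u\|_2^2\log(N)/N$---makes the Bernstein exponent $\Theta(m\epsilon^2/\log N)$, but then the flatness event itself fails with probability $N^{-c}$, again not $e^{-c_1 m}$. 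This tension is structural: the elementary two-step argument does not produce a failure probability of the clean form $\exp(-c_1 m)$ with $c_1$ independent of $\epsilon$ and $N$, and in fact the exponent stated in the paper is somewhat sharper than what the standard SRHT analyses literally prove. Your closing remark about the missing $1/\sqrt{m}$ normalization on $P$ is valid and worth tracking, but it is secondary to the exponent mismatch above.
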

Here, the i.i.d. row-sampling matrix $P$ is achieved by the i.i.d. job completion times of the workers. 
This result implies that the encoding a vector $u$ as $HDu$ followed by random erasures preserves $\ell_2$ norms even though exact recovery may not be possible. Note that for any fixed vector $u$, we have
\begin{align}
    \|PHDu\|_2^2 - \|u\|_2^2 &= u^T (PHD)^T PHD u - \|u\|_2^2 \nonumber \\
    &= u^T S_{H}^TS_H u - \|u\|_2^2 \,. \label{eqn:srht_bound}
\end{align}
It follows from Theorem \ref{thm:jl_lemma} that $|u^T S_{H}^TS_H u - \|u\|_2^2 | \le \epsilon \|u\|_2^2$  with high probability. Next, we apply this bound to bound bilinear terms $u^T S_{H}^TS_H u^\prime$ for any fixed $u$ and $u^\prime$. We combine the last bound with the identity
$$ |u^T Q u^\prime| = \frac{1}{2} | {(u+u^\prime)}^T Q (u+u^\prime)  - u^TQu - {u^\prime}^TQu^\prime|,   $$
which is valid for any symmetric matrix $Q$ by applying $\eqref{eqn:srht_bound}$ three times  to obtain
\begin{align}\label{eq:bilinearbound}
|u^T (S_{H}^TS_H-I) u^\prime| \le \frac{3}{2} \epsilon(\|u\|_2^2+\|u^\prime\|^2_2).
\end{align}
Next, we apply the high-probability bound $|u^T S_{H}^TS_H u - \|u\|_2^2| \le \epsilon \|u\|_2^2$ to each data block by letting $u=A_ix$ for $i=1,...,s$. We obtain
\begin{align*}
&\mathbb{P} \Big[ \exists i \in [\mathcal{S}] ~\textrm{s.t.}~|(A_ix)^T S_{H}^TS_H (A_ix) - \|A_ix\|_2^2| > \epsilon \|A_ix\|_2^2 \Big] \\
&\le s \mathrm{exp}(-mC_1),
\end{align*}
where the inequality follows from the union bound and Theorem \ref{thm:jl_lemma}. Here, $C_1$ is a fixed constant. Finally, adding these inequalities we obtain with high probability the following
\begin{align*}
(1-\epsilon) \sum_{i=1}^s \|A_ix\|_2^2 & \le \sum_{i=1}^s (A_ix)^T S_{H}^TS_H (A_ix) \\
& \qquad \qquad \le (1+\epsilon) \sum_{i=1}^s \|A_ix\|_2^2\,.
\end{align*}
We have $\|Ax\|_2^2 = \sum_{i=1}^s \|A_ix\|_2^2 = 1$ and therefore
\begin{align*}
(1-\epsilon) \|Ax\|_2^2  \le \langle \mathcal{T}_x(S), Ax \rangle \le (1+\epsilon) \|Ax\|_2^2 \,,
\end{align*}
with probability at least $s \mathrm{exp}(-mC_1)$,
which proves the first inequality of Theorem \ref{thm:main_result}. Therefore, the inner product between the estimate $\mathcal{T}_x(\mathcal{S})$ and the vector $Ax$ is close to $\|Ax\|_2^2$ with high probability. 

Next, for fixed arbitrary vectors $\{c_i\}_{i=1}^s$ and $x$ we apply \eqref{eq:bilinearbound} to each data block as follows
\begin{align*}
&\mathbb{P} \Big[ \exists i \in [\mathcal{S}] \textrm{s.t.}|c_i^T S_{H}^TS_HA_ix -c_i^TA_ix| > \frac{3\epsilon}{2} (\|A_ix\|_2^2+\|c_i\|_2^2) \Big] \\
&\le s \, \mathrm{exp}(-mC_1).
\end{align*}
Note that we can scale $\epsilon$ and absorb the constant factors into the constant $C_1$. Adding these inequalities, we obtain with high probability the following
\begin{align}
    |c^T (\mathcal{T}_x(S)-Ax)| \le \epsilon (\|Ax\|_2^2 + \|c\|_2^2)
\end{align}
with probability at least $\mathrm{exp}(-mC^\prime)$ where $c=[c_1^T,...,c_s^T]^T$ and $C^\prime$ is a fixed constant. This proves the second inequality of the theorem. Finally, note that $\mathcal{T}_x(S)$ can be computed in $O(N\log N)$ time using the Fast Hadamard Transform \cite{polar2009arikan}. 

The empirical performance is illustrated for $N=32$ in Figure \ref{fig:averaging_Ax_num_outputs} for a synthetically generated dataset. The error decreases as the number of outputs increases until a decodable set of outputs is detected.

\begin{figure}
  \centering
  \includegraphics[width=0.6\columnwidth]{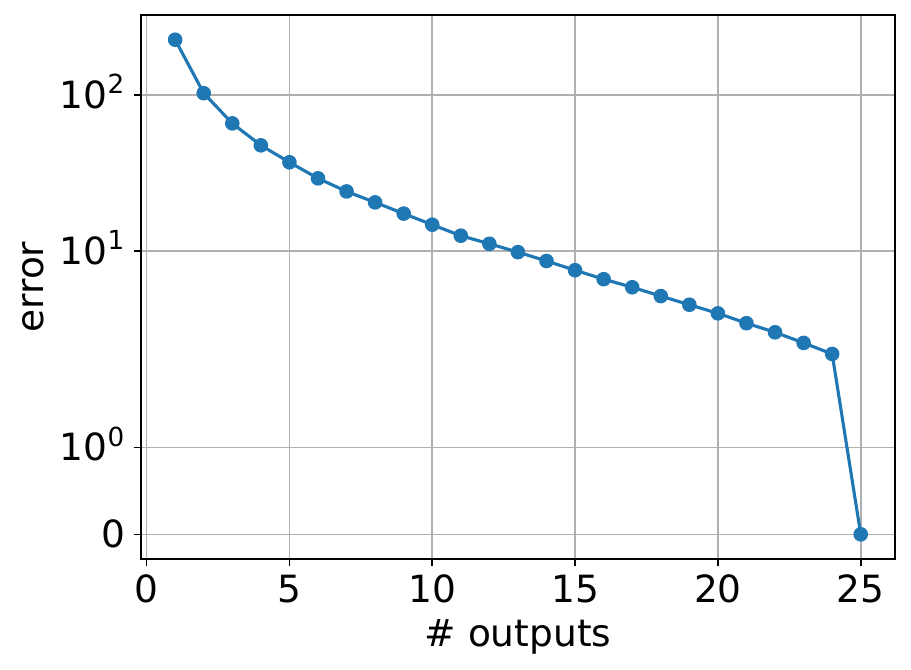}
  \vspace{-3mm}
  \caption{Error as a function of the number of outputs used in computing the estimate. The data matrix $A \in \mathbb{R}^{9600 \times 1000}$ is synthetically generated by sampling from the standard normal distribution. The vector $x \in \mathbb{R}^{1000}$ is also sampled from the standard normal distribution and then scaled by $10^{-3}$. The vertical axis shows the squared $\ell_2$ norm between $Ax$ and the estimate. The number of worker nodes is $N=32$. $24$ inputs are used for data and the remaining $8$ are frozen. For this simulation, the set of available nodes becomes decodable after 25 nodes return their output.}
  \label{fig:averaging_Ax_num_outputs}
\end{figure}

It is worth noting that our approximate recovery result can be viewed as a unification of sketching and coding interpretations of the Hadamard matrix. Specifically, our construction combines the diagonal random Rademacher matrix $D$ from SRHT with the polarization phenomenon from polar codes. As a consequence, we obtain JL embeddings with erasure recovery properties at $O(N\log(N))$ encoding and decoding time. Moreover, since our construction is based on polar codes and sequential decoding, they inherit their capacity achieving properties under random erasures.
\subsection{Exact Recovery via Sequential Decoding}
\label{subsec:exactrec}

In the previous subsection, we have focused on finding an approximate result when $\mathcal{S}$ is not decodable. Now, we discuss the exact recovery of $Ax$ via sequential decoding.

\begin{algorithm}
 \KwIn{The set $\mathcal{S}$ and indices of the frozen inputs}
 
 Initialize $I_{i,j}=$ False for all $i \in [1,N]$, $j \in [1, \log N + 1]$ \;
 
 \For{$i \in \mathcal{S}$} {
    $I_{i, \log N+1}$ = True
 }
 
 Initialize an empty list $y$ \;
 \For{$i \gets 1$ \textbf{to} $N$} {
   $\_, z_{i,1}$ = decodeRecursive($i$, $1$)\;
   \If{node $i$ is not frozen} {
     $y = [y; z_{i,1}]$\;
   }
   \Comment*[r]{forward propagation}
   \vspace{-4mm}
   \If{$i \bmod 2 = 0$ } {
     \For{$j \gets 1$ \textbf{to} $\log{N}+1$} {
       \For{$l \gets 1$ \textbf{to} $i$}{
         compute $z_{l,j}$
       }
     }
   }
  }
  return $y=Ax$
 \caption{Decoding algorithm}
 \label{decoding_alg}
\end{algorithm}

The decoding algorithm for exact recovery is given in Algorithm \ref{decoding_alg}. The decoder is a sequential algorithm that performs recovery one at a time and in an order. The notation $I_{i,j} \in \{True,False\}$ indicates whether we know the value at node $i$ in level $j$ in the code construction circuit. \textit{Level} indicates the horizontal position while \textit{node} is the vertical position. $z_{i,j}$ is a data structure that holds the value for node $i$ in level $j$.

The decoding algorithm has a subroutine called \textit{decodeRecursive} given in Algorithm \ref{decode_rec}. The main idea behind the decoding algorithm is that it works recursively from right to left (i.e. from output to input) in the code construction. It performs decoding for $2\times 2$ blocks independently and combines the result. The $2\times 2$ blocks are shown as dashed rectangles in Figure \ref{fig:circuit_fig_N_4}. For instance, the top left dashed rectangle has inputs $z_{1,1}, z_{2,1}$ and outputs $z_{1,2},z_{2,2}$. The mapping between the inputs and outputs is given by the Hadamard kernel: $z_{1,2}=z_{1,1}+z_{2,1}$ and $z_{2,2}=z_{1,1}-z_{2,1}$. Hence, when we wish to recover $z_{1,1}$, we sum the outputs and divide by 2. Similarly, to recover $z_{2,1}$, we subtract the second output from the first one and divide by 2. 

In the algorithm, we use the term \textit{pair} to refer to inputs or outputs for a single $2\times 2$ block. In the above example, $z_{1,1}$ and $z_{2,1}$ are a pair and $z_{2,1}$ and $z_{2,2}$ are another pair. The notation $I_{\mbox{pair}(i),j}$ is used to refer to the other node in the pair that node $i$ is in at the $j$'th level. For the above example, $I_{\mbox{pair}(1),1}$ is the same as $I_{2,1}$ since node $2$ is in a pair with node $1$ in the first level. Furthermore, we call the first node in the pair \textit{upper node}. For instance, $z_{1,1}$ and $z_{1,2}$ are upper nodes in their respective pairs. The goal of the decoder algorithm is to compute the values for all the nodes in the first level $z_{i,1}$, $i=1,\dots,N$.

The step of multiplication of inputs $U_i$ by the diagonal matrix $D$ is necessary for the approximate recovery. In the case of exact recovery, the outputs of the decoding algorithm will be the terms $D_iU_ix$. The desired outputs $U_ix$ can be obtained via dividing by $D_i$'s.

\DontPrintSemicolon

\begin{algorithm}
 \KwIn{Node $i \in [1,N]$, level $j \in [1,\log{N}+1]$}
 
  \If{$j = \log{N}+1$} { 
     return ($I_{i,j}$, $z_{i,j}$) if $I_{i,j}=$ True, else ($I_{i,j}$, None) \Comment*[r]{base case 1}
   }
  \If{$I_{i,j} = $ True} {
     return (True, $z_{i,j}$) \Comment*[r]{base case 2}
   }
  $I_{i,j+1} = $ decodeRecursive($i, j+1$)\;
  $I_{\mbox{pair}(i),j+1} = $ decodeRecursive($\mbox{pair}(i), j+1$)\;
 
  \uIf{$i$ is upper node}{
    \If{$I_{i,j+1} \text{ AND } I_{\mbox{pair}(i),j+1} =$ True}{
      compute $z_{i,j}$\;
      return (True, $z_{i,j}$)
    }
  }
  \Else{
    \If{$I_{i,j+1} \text{ OR } I_{\mbox{pair}(i),j+1} =$ True}{
      compute $z_{i,j}$\;
      return (True, $z_{i,j}$)
    }
  }
 return (False, None)
 \caption{decodeRecursive($i$, $j$)}
 \label{decode_rec}
\end{algorithm}

\subsection{Polarization of Computation Times}
\label{sec:comp_times}
In this section, we analyze the time at which the coded computations are decodable. We show that a real valued version of the recursive construction in polar codes enables polarization of the probability density functions towards better or worse computing times, analogous to perfectly noiseless or noisy channels.

\subsubsection{Characterization of Polarizing Kernels} \label{subsec:characterization}

As opposed to classical Polar Codes that operate in finite fields, our constructions have more freedom in their design space over the real numbers. In particular, the polarization kernels can be chosen arbitrarily as long as polarization takes place. In this section, we provide a characterization of matrices that enable polarization and exact recovery. We first give the definition of a \textit{polarizing kernel}, and then state Lemma \ref{polarizing_kernel_lemmma} that characterizes the conditions for a polarizing kernel.
\begin{definition}[Polarizing $2\times 2$ kernel over reals] \label{polarizing_kernel_def}
Let $f$ be a function satisfying the linearity property $f(au_1+bu_2) = af(u_1)+bf(u_2)$ where $a,b \in \mathbb{R}$ and assume that there is an algorithm to compute $f$ that takes a certain amount of time to run with its run time distributed randomly. Let $K$ denote a $2\times 2$ kernel and $\bigl[\begin{smallmatrix} v_1 \\ v_2 \end{smallmatrix}\bigr] = K \times \bigl[\begin{smallmatrix} u_1 \\ u_2 \end{smallmatrix}\bigr]$. Assume that we input $v_1$ and $v_2$ to two i.i.d. instances of the same algorithm for $f$. Further, let $T_1$, $T_2$ be random variables denoting the run times for computing $f(v_1)$, $f(v_2)$, respectively. We are interested in computing $f(u_1)$, $f(u_2)$ in this order. If the time required to compute $f(u_1)$ is $\max(T_1,T_2)$ and the time required to compute $f(u_2)$ given the value of $f(u_1)$ is $\min(T_1,T_2)$, then we say $K$ is a polarizing kernel. Note that this definition exclusively considers the earliest time at which $f(u_i)$ can be computed, without accounting for the decoding time involved.
\end{definition}

\begin{lemma} \label{polarizing_kernel_lemmma}
A kernel $K \in \mathbb{R}^{2 \times 2}$ is a polarizing kernel if and only if the following conditions are both satisfied: 1) Both elements in the second column of $K$ are non-zero, 2) $K$ is invertible.
\end{lemma}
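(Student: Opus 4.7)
The plan is to translate the timing requirements in Definition \ref{polarizing_kernel_def} into algebraic conditions on the kernel. Writing $K=\bigl[\begin{smallmatrix} a & b \\ c & d \end{smallmatrix}\bigr]$ and expanding by linearity of $f$ gives the two scalar identities
\begin{align*}
f(v_1) &= a\,f(u_1)+ b\,f(u_2),\\
f(v_2) &= c\,f(u_1)+ d\,f(u_2).
\end{align*}
Once this is written down, the question of when the decoder can reconstruct $f(u_1)$ or $f(u_2)$ from some subset of $\{f(v_1),f(v_2)\}$, possibly with $f(u_1)$ as side information, becomes purely a question about when the corresponding linear (sub)system admits a unique solution for arbitrary data.

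For the sufficiency direction, I would assume $b,d\neq 0$ and $\det K\neq 0$ and verify both timing conditions. Given $f(u_1)$, the first equation solves for $f(u_2)$ as soon as $f(v_1)$ is available (using $1/b$), and the second equation does the same using $f(v_2)$ (via $1/d$); hence $f(u_2)$ can be obtained by time $\min(T_1,T_2)$. For $f(u_1)$, each single equation is underdetermined in the two unknowns $(f(u_1),f(u_2))$ whenever $b$, respectively $d$, is nonzero, so the decoder cannot produce $f(u_1)$ from $f(v_1)$ alone or from $f(v_2)$ alone; it must wait until both outputs are in at time $\max(T_1,T_2)$, at which point invertibility of $K$ lets it solve the full $2\times 2$ system.

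For necessity, I would reverse each implication by reading the timing statement two-sidedly: "the time required to compute is $t$" means both that the decoder can finish by time $t$ and that no earlier reconstruction is information-theoretically possible. The statement about $f(u_2)$ forces whichever output arrives first at time $\min(T_1,T_2)$ to determine $f(u_2)$ given $f(u_1)$; inspection of the two equations shows this is possible from $f(v_1)$ iff $b\neq 0$ and from $f(v_2)$ iff $d\neq 0$, giving the second-column condition. The statement about $f(u_1)$ forces the pair $(f(v_1),f(v_2))$ to uniquely determine $(f(u_1),f(u_2))$ for arbitrary data, which is precisely invertibility of $K$. The only subtle step is this lower-bound direction of the definition: it uses the fact that over $\mathbb{R}$ a single linear equation in two scalar unknowns leaves a whole line of solutions and therefore cannot pin down $f(u_1)$ alone. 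Everything else reduces to routine $2\times 2$ linear algebra, so I do not anticipate a significant obstacle beyond stating this dichotomy carefully.
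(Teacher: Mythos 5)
Your proposal is correct and follows essentially the same route as the paper's proof: write the kernel action as a $2\times 2$ linear system via linearity of $f$, derive the second-column condition from recoverability of $f(u_2)$ from either single output given $f(u_1)$, and derive invertibility from recoverability of $f(u_1)$ from both outputs. Your explicit treatment of the ``no earlier reconstruction'' side of the timing statement (a single equation in two unknowns with nonzero second coefficient cannot pin down $f(u_1)$) is a slightly more careful reading of the definition than the paper's, but the substance is identical.
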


Theorem \ref{thm_decoder_kernel} builds on Lemma \ref{polarizing_kernel_lemmma} to identify the polarizing kernels that require the least amount of computations for encoding. 

\begin{theorem} \label{thm_decoder_kernel}
 Of all possible $2\times 2$ polarizing kernels, the kernels $F_2=\bigl[\begin{smallmatrix}1 & 1\\ 0 & 1\end{smallmatrix}\bigr]$ and $F_2^\prime=\bigl[\begin{smallmatrix}0 & 1\\ 1 & 1\end{smallmatrix}\bigr]$ result in the fewest number of computations for encoding real-valued data.
\end{theorem}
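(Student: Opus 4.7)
The plan is to combine Lemma \ref{polarizing_kernel_lemmma} with a careful accounting of scalar operations and minimize the resulting cost over all admissible kernels. By that lemma, any polarizing kernel $K = [K_{ij}]$ must satisfy $K_{12} \neq 0$, $K_{22} \neq 0$, and $\det(K) = K_{11}K_{22} - K_{12}K_{21} \neq 0$. I will adopt the natural cost model for encoding over the reals: each nonzero entry $K_{ij}$ with $|K_{ij}| \neq 1$ incurs one scalar multiplication, each $K_{ij} = -1$ incurs one sign flip, and each row of $K$ containing two nonzero terms incurs one addition (a row with a single nonzero entry reduces to a copy).

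The next step is to use the constraints to restrict the zero pattern of $K$. Since $K_{12}$ and $K_{22}$ must be nonzero, zeros can only appear in $K_{11}$ or $K_{21}$; moreover, invertibility rules out $K_{11} = K_{21} = 0$, since that forces $\det(K) = 0$. Hence a polarizing kernel has either three or four nonzero entries, and I would split the argument into these two cases.

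In the three-nonzero case, exactly one of $K_{11}, K_{21}$ vanishes. To minimize cost, I set the remaining three entries to $+1$, which incurs no multiplications and no sign flips and leaves exactly one addition in the row with two nonzero terms. These choices yield precisely $F_2$ (when $K_{21} = 0$) and $F_2'$ (when $K_{11} = 0$), at a cost of one operation each. In the four-nonzero case, both rows require an addition, so the cost is already at least two; furthermore, the all-ones matrix $\bigl[\begin{smallmatrix}1 & 1 \\ 1 & 1\end{smallmatrix}\bigr]$ is singular and hence not polarizing, so any four-nonzero polarizing kernel must have at least one entry outside $\{0, +1\}$, which brings in an additional multiplication or sign flip beyond the two additions. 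Uniqueness within the three-nonzero case follows by noting that replacing any $+1$ entry by $-1$ adds a sign flip and replacing by a scalar with modulus other than $1$ adds a multiplication, so $F_2$ and $F_2'$ are the unique minimizers.

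The main delicacy is the choice of cost model: without charging a nonzero cost to sign inversions, kernels such as $\bigl[\begin{smallmatrix}1 & -1 \\ 0 & 1\end{smallmatrix}\bigr]$ would also achieve one operation and the theorem would not single out $F_2$ and $F_2'$. I would therefore state the cost model explicitly at the outset, justified by the fact that in a practical floating-point implementation a sign inversion is a distinct arithmetic action that is genuinely avoided by the $\{0,1\}$-valued kernels $F_2$ and $F_2'$. The remaining steps are routine case checks that mainly exploit the zero-pattern restrictions imposed by Lemma \ref{polarizing_kernel_lemmma}.
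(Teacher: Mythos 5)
Your proposal is correct and follows essentially the same route as the paper's proof: invoke Lemma \ref{polarizing_kernel_lemmma} to force $K_{12},K_{22}\neq 0$ plus at least one nonzero entry in the first column, observe that four nonzero entries cost strictly more than three, and then set the three surviving entries to ones to reduce the encoding cost to a single addition. Your explicit cost model (charging for sign flips so that, e.g., $\bigl[\begin{smallmatrix}1 & -1\\ 0 & 1\end{smallmatrix}\bigr]$ is excluded) is a refinement the paper leaves implicit, but it does not change the substance of the argument, and the theorem as stated only claims that $F_2$ and $F_2^\prime$ attain the minimum rather than that they are the unique minimizers.
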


The proofs of Lemma \ref{polarizing_kernel_lemmma} and Theorem \ref{thm_decoder_kernel} are in the Appendix. Note that the results so far apply to only size $2$ kernels. Next, we extend the results to kernels of bigger size.

\begin{definition}[Polarizing $p\times p$ kernel over reals] \label{def:polarizing_kernel_def_size_p}
This definition extends Definition \ref{polarizing_kernel_def} for polarizing kernels to arbitrary size kernels. Let $\pi_i$ denote the index of the node $i$ when the run times of nodes are sorted in decreasing order, $T_{\pi_1} \geq T_{\pi_2} \geq \dots \geq T_{\pi_p}$. If the time required to compute $f(u_i)$ is equal to $T_{\pi_i}$ for all $i=1,\dots,p$, then we say it is a polarizing kernel.
\end{definition}

\noindent \textbf{(i) Kernel Size 3:}
Consider the $3\times 3$ kernel $K = \Bigl[\begin{smallmatrix} a & b & c \\ d & e & f \\ g & h & i \end{smallmatrix}\Bigr]$. Then, $K$ is a polarizing kernel if and only if
\begin{enumerate}
    \item $K$ is invertible over the reals
    \item All of the matrices $\bigl[\begin{smallmatrix} b & c \\ e & f \end{smallmatrix}\bigr]$, $\bigl[\begin{smallmatrix} b & c \\ h & i \end{smallmatrix}\bigr]$, $\bigl[\begin{smallmatrix} e & f \\ h & i \end{smallmatrix}\bigr]$ are invertible over the reals
    \item $c,f,i$ are all non-zero.
\end{enumerate}
We note that the following kernel does not require any multiplications and only requires additions and subtractions and also is a polarizing kernel: $K = \Bigl[\begin{smallmatrix} 1 & 1 & 1 \\ 0 & -1 & 1 \\ 0 & 0 & 1 \end{smallmatrix}\Bigr]$.

\noindent \textbf{(ii) Arbitrary Kernel Size:}
Let $K$ be a $p\times p$ kernel. It is a polarizing kernel if and only if it satisfies the following conditions:
\begin{enumerate}
    \item $K$ is invertible
    \item After removing the first column of $K$, every $p-1$ rows of the remaining matrix is a matrix invertible over the reals
    \item After removing the first and second columns of $K$, every $p-2$ rows of the remaining matrix is a matrix invertible over the reals
    \item[\vdots] 
    \item[$p-1$)] After removing the first $(p-1)$ columns of $K$, every scalar in the remaining matrix is non-zero (i.e. $1\times 1$ invertible matrix).
\end{enumerate}
Note that for $p=4$, the following upper triangular matrix is a polarizing kernel:\begin{center} $K = \biggl[\begin{smallmatrix} 1 & 1 & 1 & 1 \\ 0 & 1 & 2 & 3 \\ 0 & 0 & 1 & 4 \\ 0 & 0 & 0 & 1\end{smallmatrix} \biggr].$\end{center}

\subsubsection{Recursive Polarization} \label{subsec:polarization}

As in polar codes, we now consider the recursive application of any $2 \times 2$ polarizing kernel via the Kronecker power construction $K_{2N} = K \otimes K_N$ and $K_2=K$. For instance, one can take $K=F_2=\bigl[\begin{smallmatrix}1 & 1\\ 0 & 1\end{smallmatrix}\bigr]$ to obtain the real valued version of polar codes. This operation is depicted for $F_4$ in Figure \ref{fig:circuit_fig_N_4}. Note that the input is permuted to bit reversed order in Figure \ref{fig:circuit_fig_N_4}. We refer the reader to \cite{polar2009arikan} for a detailed description of the bit reversal process. In the case of the Hadamard kernel $H_2$, the recursive construction coincides with the Hadamard transformation. We now analyze the run-times in the recursive construction by illustrating the $4\times 4$ construction.
\begin{lemma}
\label{lem:recursive}
Suppose that the kernel $K$ is polarizing as given in Definition \ref{polarizing_kernel_def} and Lemma \ref{polarizing_kernel_lemmma}, and let $T_1,\dots,T_4$ be random i.i.d. run-time random variables. Then the sequential decoding procedure described in Algorithm 2 computes $f(u_1),f(u_2),f(u_3),f(u_4)$ respectively in time
\begin{align*}
T^{(1)}&= \max\big(\max( T_{1}, T_{2} ), \max( T_{3}, T_{4} )\big) \\
T^{(2)}&= \min\big( \max( T_{1}, T_{2}  ), \max( T_{3}, T_{4} )\big) \\
T^{(3)}&= \max\big( \min( T_{1}, T_{2}  ), \min( T_{3}, T_{4} )\big) \\
T^{(4)}&= \min\big( \min( T_{1}, T_{2}  ), \min( T_{3}, T_{4} )\big).
\end{align*}
\end{lemma}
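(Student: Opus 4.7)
The plan is to exploit the Kronecker structure $K_4 = K \otimes K$ (with the bit-reversal permutation shown in Figure \ref{fig:circuit_fig_N_4}) to decompose the sequential decoder of Algorithm \ref{decoding_alg} into an outer $2\times 2$ polar decoder acting on the outputs of two inner $2\times 2$ polar decoders, and then to apply Definition \ref{polarizing_kernel_def} (together with Lemma \ref{polarizing_kernel_lemmma}) twice. Concretely, I would partition the four workers into Half $A = \{1,2\}$ and Half $B=\{3,4\}$, noting that the rightmost stage of the circuit is exactly a copy of the polarizing kernel $K$ inside each half, producing two intermediate ``virtual'' variables per half.

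Applied at the inner level, Definition \ref{polarizing_kernel_def} immediately gives that, within Half $A$, the sequential decoder recovers the worse inner virtual variable as soon as both workers in that half have finished, at time $\max(T_1,T_2)$, and then the better inner virtual variable at time $\min(T_1,T_2)$; the analogous statement holds for Half $B$ with times $\max(T_3,T_4)$ and $\min(T_3,T_4)$. These four times are the effective completion times of the four inputs to the outer $K$-transformations.

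The key structural observation at the outer level is that, under the bit-reversal convention depicted in Figure \ref{fig:circuit_fig_N_4}, the outer copies of the kernel $K$ pair up the two halves' inner-worst virtual variables (these collectively produce $u_1,u_2$ after the outer decoding) and, separately, the two halves' inner-best virtual variables (producing $u_3,u_4$). Applying Definition \ref{polarizing_kernel_def} to each of the two outer pairs yields precisely the four formulas: for the outer ``inner-worst'' pair, the worse combination $u_1$ is decodable at $\max\bigl(\max(T_1,T_2),\max(T_3,T_4)\bigr)$ and the better one $u_2$ at $\min\bigl(\max(T_1,T_2),\max(T_3,T_4)\bigr)$; for the outer ``inner-best'' pair, analogously, $u_3$ at $\max\bigl(\min(T_1,T_2),\min(T_3,T_4)\bigr)$ and $u_4$ at $\min\bigl(\min(T_1,T_2),\min(T_3,T_4)\bigr)$.

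The main obstacle will be justifying the clean decoupling between the two outer pairs --- in particular, that when the decoder reaches $u_3$, the inner-worst virtual variables of both halves have already been recovered as a byproduct of decoding $u_1,u_2$, so that the inner-best virtual variables can be treated exactly in the setting of Definition \ref{polarizing_kernel_def}. This requires carefully tracking which $I_{i,j}$ flags Algorithm \ref{decoding_alg} sets to \textsf{True} during the forward-propagation step after $u_1$ and $u_2$ have been decoded, and verifying that the bit-reversal pairing indeed matches inner-worst with inner-worst and inner-best with inner-best. Once this bookkeeping is settled, the lemma follows from two applications of the already-established $2\times 2$ polarizing-kernel property, and the argument generalizes by induction to $N=2^k$ although only the $N=4$ case is stated.
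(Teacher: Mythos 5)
Your proposal is correct and follows essentially the same route as the paper's own argument: the paper's proof simply observes that each $2\times 2$ stage transforms two independent availability times into their $\max$ and $\min$ and that applying this recursively (rightmost stage on the worker times, then the leftmost stage on the resulting virtual-variable times) yields the four formulas, which is exactly your inner/outer decomposition. Your additional care about the bit-reversal pairing (inner-worst with inner-worst, inner-best with inner-best) and about the $I_{i,j}$ flags being set before $u_3$ is reached is a valid elaboration of bookkeeping the paper leaves implicit.
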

\begin{proof}
Note that the recursive construction (e.g. in Figure \ref{fig:circuit_fig_N_4}) combines two independent random run-times and transforms to $\max(T_1,T_2)$ and $\min(T_1,T_2)$. Recursively applying this transformation proves the statement.
\end{proof}

Another way to show the run-time transformations for $N=4$ would be to consider that to recover the first input, we need to know both outputs of the top left dashed rectangle in Figure \ref{fig:circuit_fig_N_4}. That requires knowing all 4 outputs; hence $T^{(1)}$ is equal to maximum of all $T_i$'s. Next, given the first input, to recover the second one, we need only one of the outputs of the top left dashed rectangle. Therefore, we obtain that $T^{(2)}= \min\big( \max( T_{1}, T_{2}  ), \max( T_{3}, T_{4} )\big)$. The idea is similar for the  recovery of the third and fourth inputs.

The result of Lemma \ref{lem:recursive} naturally extends to constructions of arbitrary sizes where the corresponding run times are alternating $\min$ and  $\max$ expressions. Consequently, one can freeze certain variables to obtain a faster overall run-time. For example, freezing the first input by setting it to a fixed value (e.g., zero), the decoder leverages this knowledge to eliminate the run-time $T^{(1)}= \max\big(\max( T_{1}, T_{2} ), \max( T_{3}, T_{4} )\big)$.

\noindent \textbf{Example 1.} Let us illustrate Lemma \ref{lem:recursive} for the case of uniformly random run time distributions with closed-form formulas. Suppose that $T_1,T_2,T_3,T_4$ are i.i.d. and uniform in the interval $[0,1]\subseteq\mathbb{R}$. Then a straightforward calculation shows that the probability density functions of the run times $T^{(1)},T^{(2)},T^{(3)},T^{(4)}$ are given by
\begin{align*}
    p_{T^{(1)}}(t) &= 4t^3\\
    p_{T^{(2)}}(t) &= 4t(1-t)(1+t)\\
    p_{T^{(3)}}(t) &= 4t(1-t)(2-t)\\
    p_{T^{(4)}}(t) &= 4(1-t)^3
\end{align*}
for $t \in [0,1]$. It can be seen that the probability density functions are degree $N-1$ polynomials for the general size $N$ construction with the uniform distribution.\\
\noindent \textbf{Polarization for Kernel Size 3:}
Let us denote the inputs by $x_1,x_2,x_3$ and the corresponding channel outputs by $y_1,y_2,y_3$. Next, note that
$x_1$ can be recovered when all three outputs $y_1,y_2,y_3$ are known. Hence the run time for $x_1$ is equal to $\max(T_1, T_2, T_3)$. Given $x_1$, decoding $x_2$ requires at least two of $y_1,y_2,y_3$. The run-time for $x_2$ is equal to $\median(T_1,T_2,T_3)$. Given $x_1, x_2$, the decoding of $x_3$ will take $\min(T_1,T_2,T_3)$. 

Note that the run times for a construction of size $N=9$ can be obtained using the same method that we previously described for computing the run times for $N=4$ with kernel size $2$ in Lemma \ref{lem:recursive}. In particular, the run times for $N=9$ will be as follows: Decoding $x_1$ will take time $\max(\max(T_1,T_2,T_3), \max(T_4,T_5,T_6), \max(T_7,T_8,T_9)) = \max(T_1,\dots,T_9)$. Decoding $x_2$ will take time $\median(\max(T_1,T_2,T_3), \max(T_4,T_5,T_6), \max(T_7,T_8,T_9))$. Decoding the last input $x_9$ will take time $\min(T_1,\dots,T_9)$.

\noindent\textbf{Polarization for Arbitrary Kernel Sizes:}
Let $p \geq 2$ denote the kernel size. Let $\pi_i$ denote the index of the node $i$ when the run times of nodes are sorted in decreasing order, $T_{\pi_1} \geq T_{\pi_2} \geq \dots \geq T_{\pi_p}$. Then, for kernels of arbitrary size $p$, the run-time required for decoding $x_i$ is equal to $T_{\pi_i}$. It is easy to observe that polarization for kernel sizes 2 and 3 are a special case of this result.

\section{Numerical Results} \label{sec:numerical_results}

In this section, we present numerical results to verify our theoretical claims and test the performance of the proposed methods in various tasks.

\subsection{Polarization of Computation Times}
Figure \ref{polarized_cdfs} is a visualization of the polarization of computation times. Plot (a) shows the empirical cumulative distribution function (CDF) of computation times for serverless functions in AWS Lambda. This empirical CDF has been obtained by running the same Python script in $500$ serverless functions in parallel in AWS Lambda. Furthermore, observe that plot (a) shows that there are worker nodes that finish their computations much later in roughly $t=120$ seconds as opposed to many worker nodes that finish before $t=20$ seconds.

Plot (b) of Figure \ref{polarized_cdfs} shows the CDF for the transformed computation times for $2$ workers. This has been simulated by assuming there are $2$ worker nodes and they have i.i.d. computation times with CDF shown in plot (a). Similarly, plot (c) and (d) show the CDFs for $16$ and $64$ worker nodes. This process shows that we transform the computation times into better and worse computation times, i.e. polarization of the computation times. Freezing the inputs with worse computation times leads to a straggler-resilient computation mechanism since this is the same as picking only the transformed nodes with better computation times to perform the actual computation.

\begin{figure}
\begin{minipage}{0.4\columnwidth}
  \centering
  \centerline{\includegraphics[width=\columnwidth]{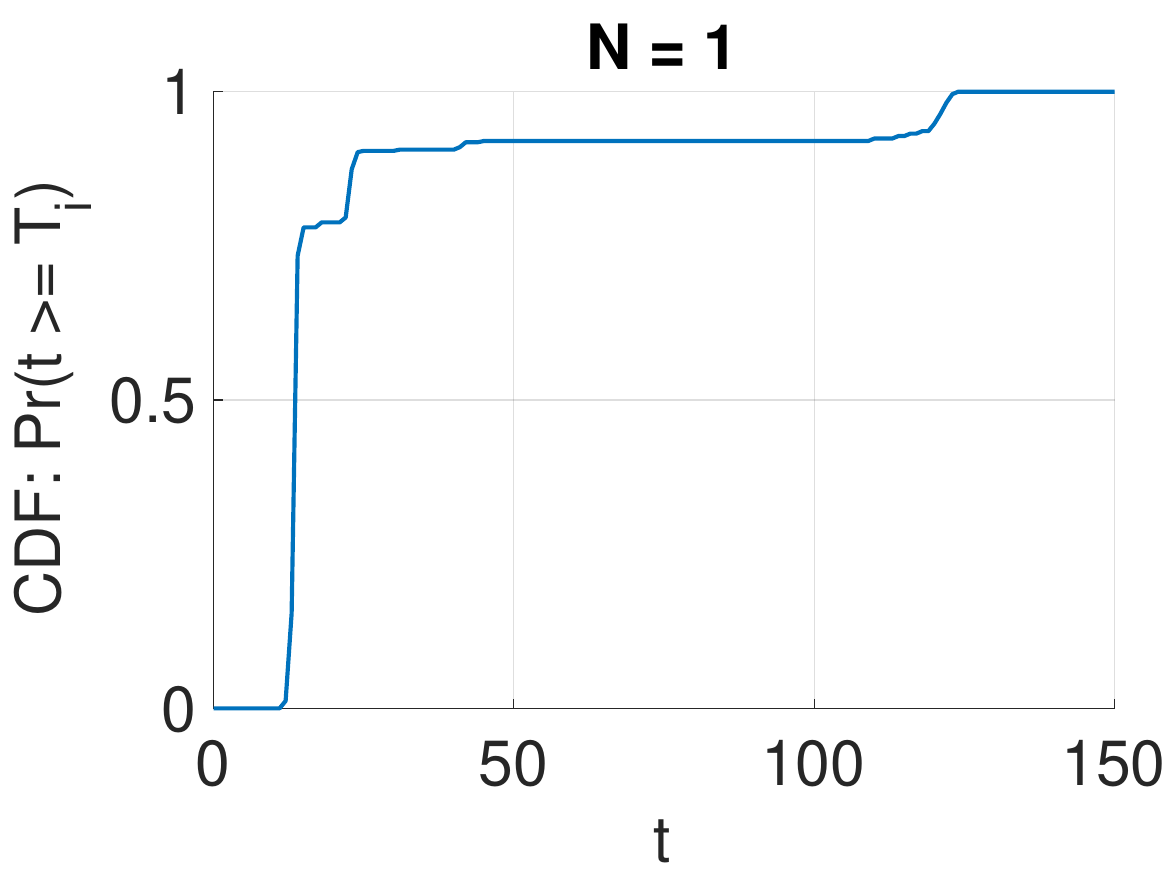}}
  \vspace{-2mm}
  \centerline{\footnotesize (a) $N=1$}
\end{minipage}
\hfill
\begin{minipage}{0.4\columnwidth}
  \centering
  \centerline{\includegraphics[width=\columnwidth]{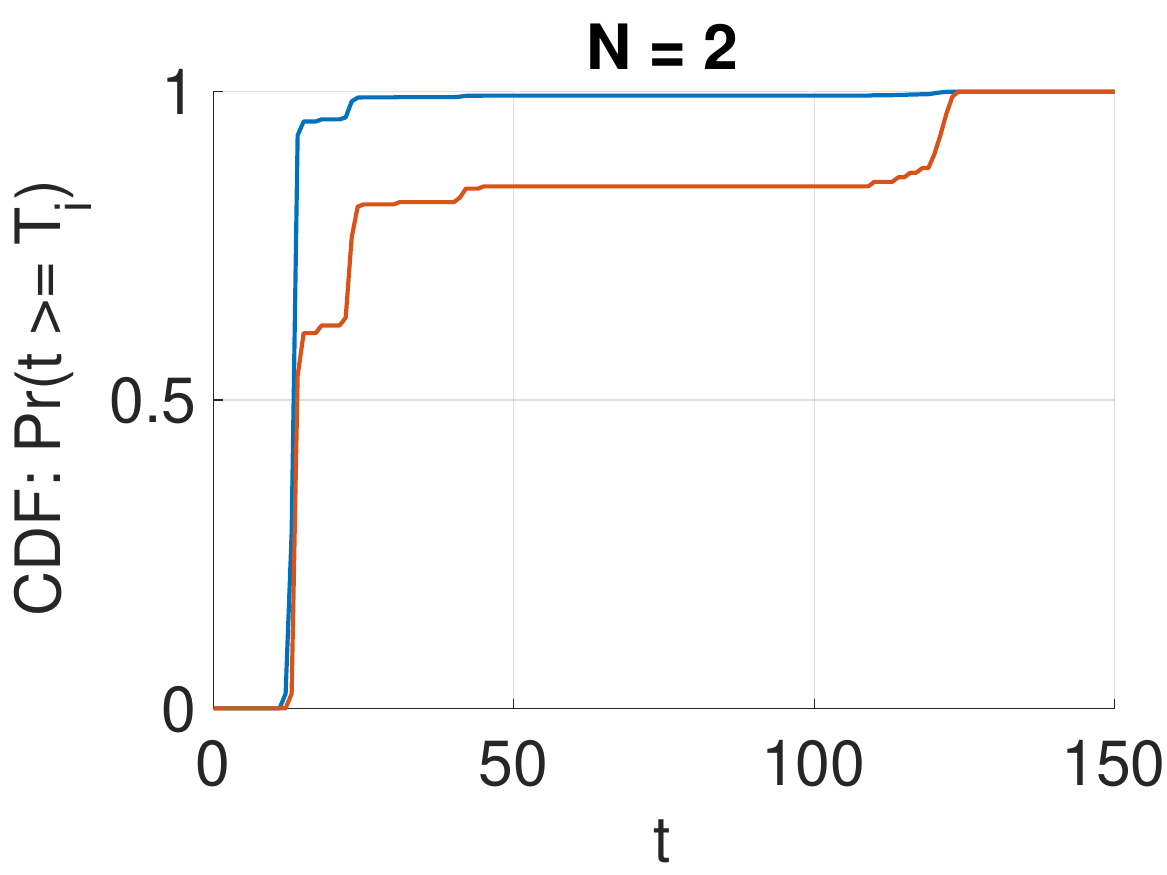}}
  \vspace{-2mm}
  \centerline{\footnotesize (b) $N=2$}
\end{minipage}
\begin{minipage}{0.4\columnwidth}
  \centering
  \centerline{\includegraphics[width=\columnwidth]{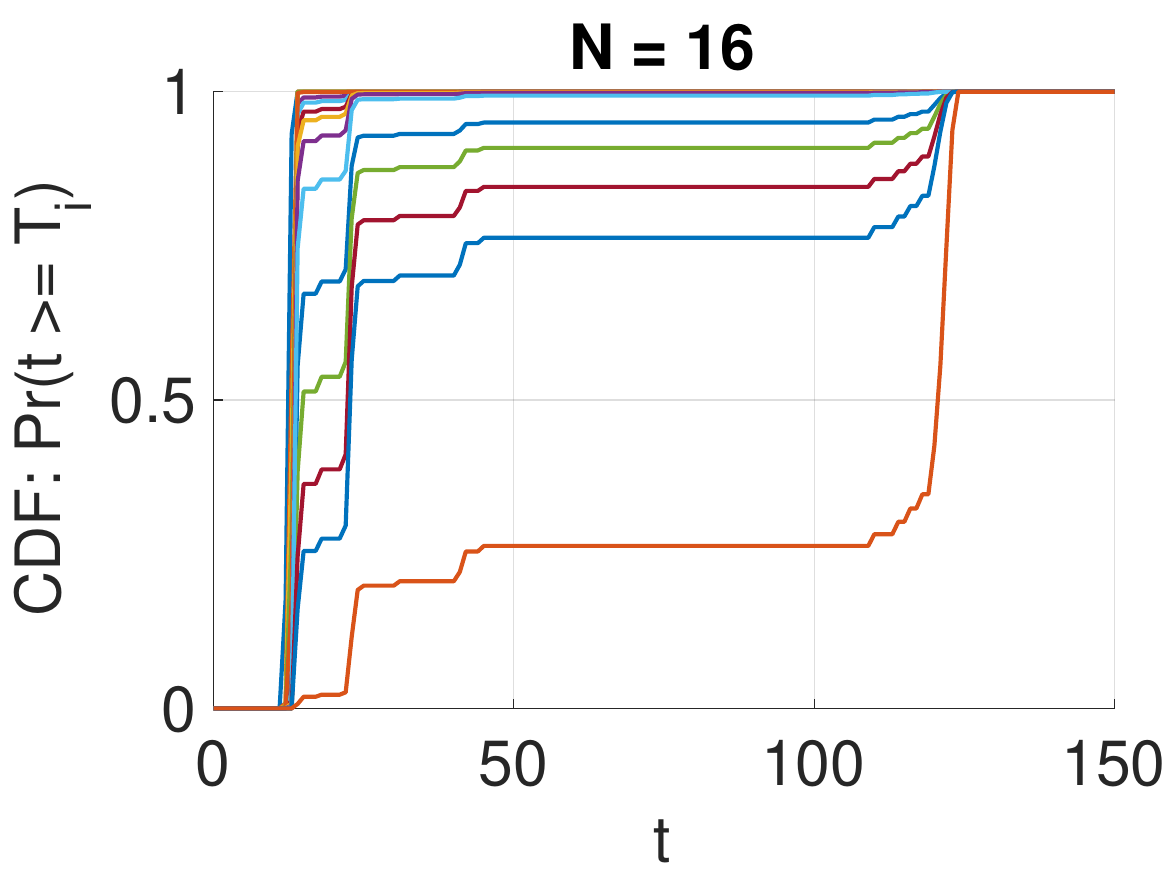}}
  \vspace{-2mm}
  \centerline{\footnotesize (c) $N=16$}
\end{minipage}
\hfill
\begin{minipage}{0.4\columnwidth}
  \centering
  \centerline{\includegraphics[width=\columnwidth]{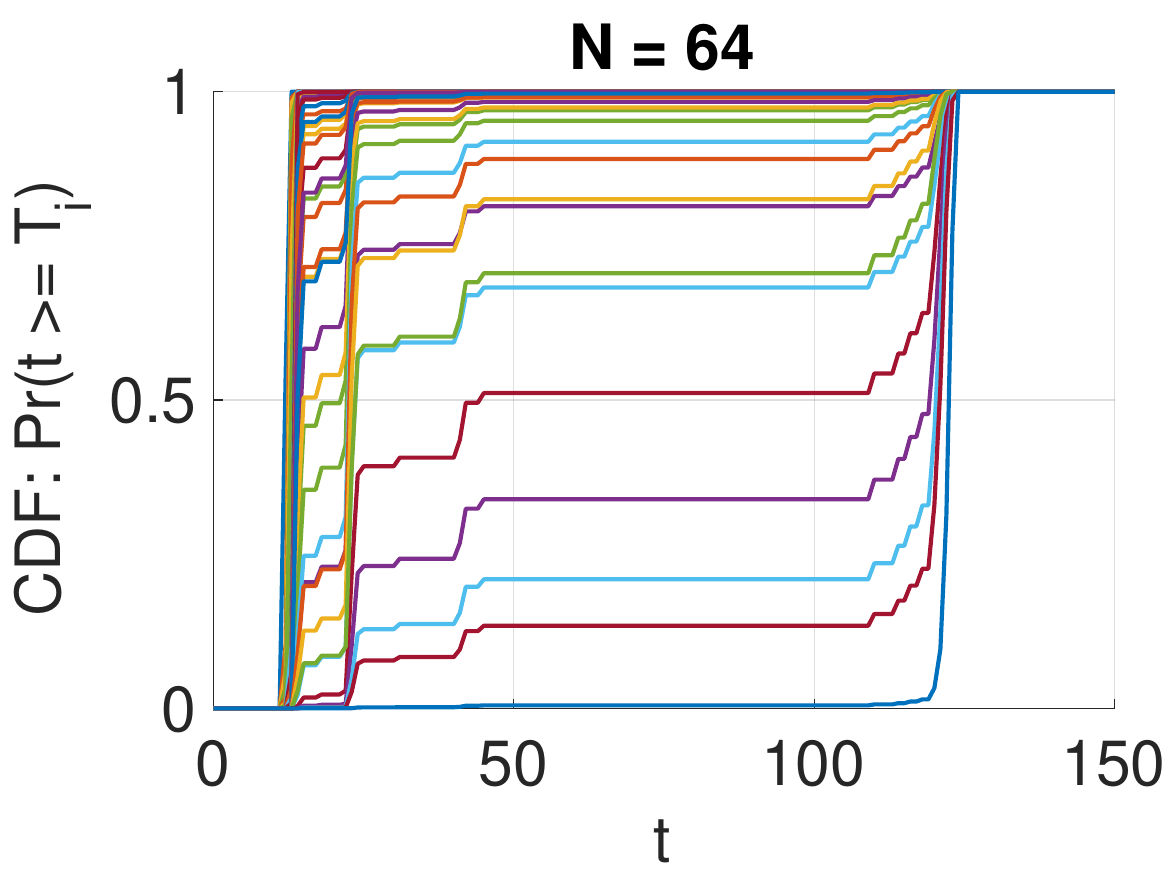}}
  \vspace{-2mm}
  \centerline{\footnotesize (d) $N=64$}
\end{minipage}
\caption{Empirical CDFs for various construction sizes show the polarization of run times.}
\label{polarized_cdfs}
\end{figure}

\subsection{Gradient Descent for the Least Squares Problem}
Polar coded distributed computation method can be used in any algorithm that requires matrix multiplication. Consider the gradient descent algorithm being applied to solve a linear least squares problem $\text{minimize}_x \, \|Ax-y\|_2^2$,
where $A$ is a large-scale data matrix. The variable $x$ is small enough to fit in the memory of a worker node. The update rule for gradient descent is as follows:
\begin{align} \label{eq_update_rule}
    x_{t+1} = x_t - \mu (A^TAx_t - A^Ty),
\end{align}
where the subscript $t$ in $x_t$ denotes the iteration number. Note that we can compute $A^Ty$ only once and but need to compute $A^TAx_t$ multiple times across iterations.

One possible scenario is to encode both $A$ and $A^T$ separately and then use the coded matrix-vector multiplication method twice every iteration; once for $Ax_t$ using the coding on $A$ and once for $A^T(Ax_t)$ using the coding on $A^T$. Another choice is to compute $A^TA$ offline and encode directly the product $A^TA$. In this case, we use the coded matrix-vector multiplication method only once per iteration.

Figure \ref{plot_GD_8} compares the uncoded and the polar coded distributed computation methods along with different values for $N$ and erasure probabilities $\epsilon$. In this experiment we have pre-computed and encoded the matrix product $A^TA$.  Then, in each iteration of the gradient descent, the central node decodes the downloaded outputs, updates $x_t$, sends the updated $x_t$ to AWS S3 and initializes the computation $A^TAx_t$. In the case of uncoded computation, we simply divide the multiplication task among $N (1-\epsilon)$ serverless functions, and whenever all of the $N (1-\epsilon)$ functions finish their computations, the outputs are downloaded to the central node, and there is no decoding. Then, the central node computes and sends the updated $x_t$, and initializes the next iteration. The data matrix has dimensions $A \in \mathbb{R}^{20000\times 4800}$, the variable is $x \in \mathbb{R}^{4800\times 1000}$, and the output is $y \in \mathbb{R}^{20000\times 1000}$. We have randomly generated the data used in this experiment.

We note that in a given iteration, while computation with polar coding with rate $(1-\epsilon)$ requires waiting for the first decodable set of outputs out of $N$ outputs, uncoded computation waits for all $N (1-\epsilon)$ nodes to finish computation. Using $\epsilon$ as a tuning parameter for redundancy, we achieve different convergence times.

\subsection{ImageNet: Large-Scale Experiment}
Figure \ref{fig:imagenet} shows the cost ($\|Ax-y\|_2^2$) against wall-clock time when we solve the least squares problem where the data matrix $A$ consists of the first $128$ classes of the ImageNet dataset \cite{krizhevsky2012imagenet}. This experiment aims to demonstrate that gradient descent with coded matrix multiplication can be used to speed up fine-tuning of pre-trained machine learning models. 

Each data sample of the ImageNet dataset is an RGB picture of (rescaled) dimensions $256\times 256 \times 3$. Figure \ref{fig:imagenet} compares the computation speeds of the naive approach with no coding (orange) and partial coded construction (blue). The circles show the beginning of each iteration. For the partial coded construction, the construction size is $2$, which is equivalent to the repetition coding. Both methods have been run for $30$ iterations with $5$ iterations per serverless function lifetime. In other words, each serverless function has been reused for $5$ iterations. Figure \ref{fig:imagenet} demonstrates that coding, even for a small construction size of $2$ helps speed up the computation. Larger construction sizes are expected to reduce the computation times further at the expense of increased encoding times.

\begin{figure}
\begin{minipage}{\columnwidth}
  \centering
  \centerline{\includegraphics[width=0.58\textwidth]{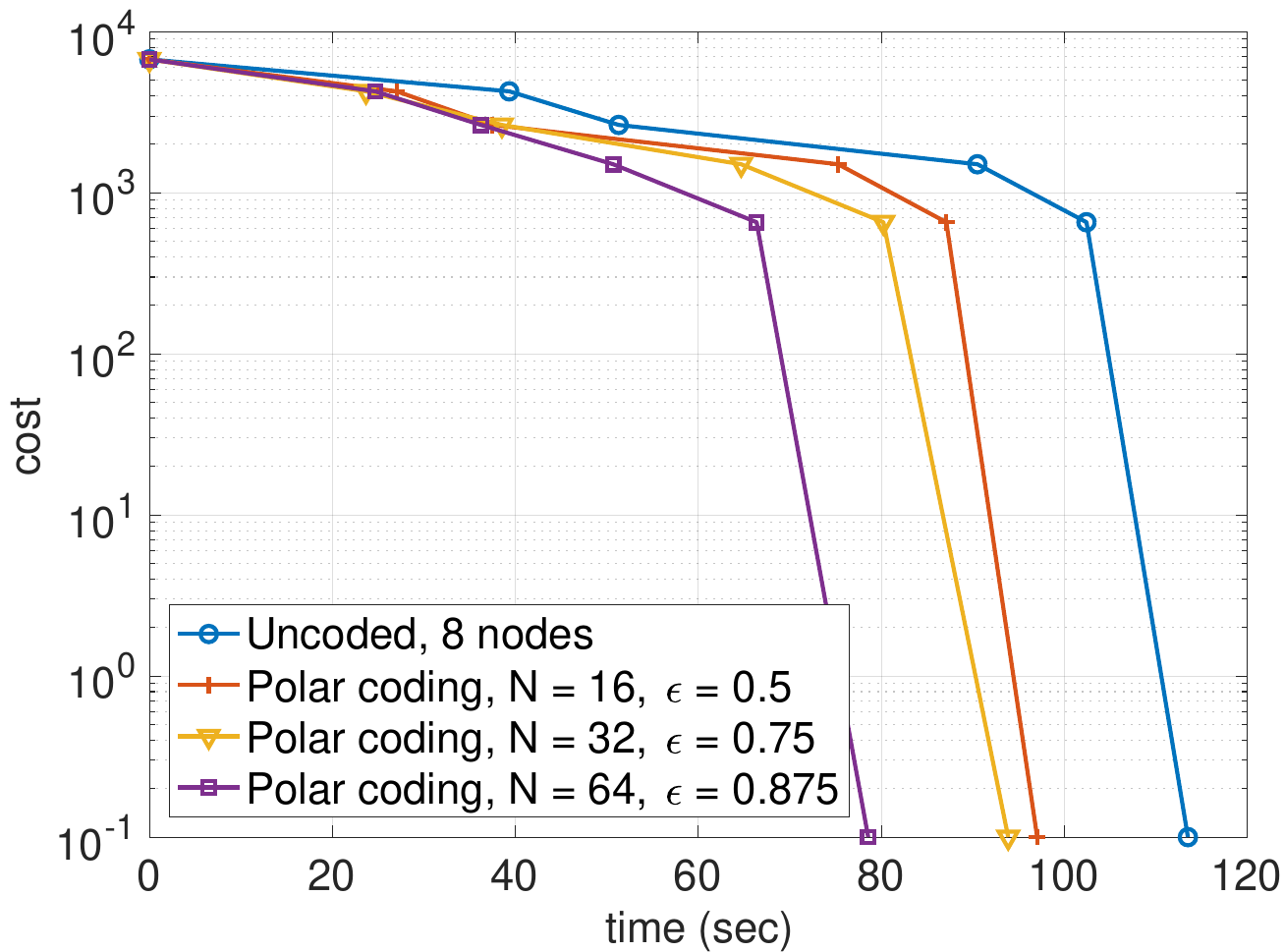}}
  \vspace{-3mm}
  \caption{Cost ($\|Ax-y\|_2^2$) against wall-clock time for the gradient descent example.}\medskip
  \label{plot_GD_8}
\end{minipage}
\hfill
\begin{minipage}{\columnwidth}
  \centering
  \centerline{\includegraphics[width=0.65\textwidth]{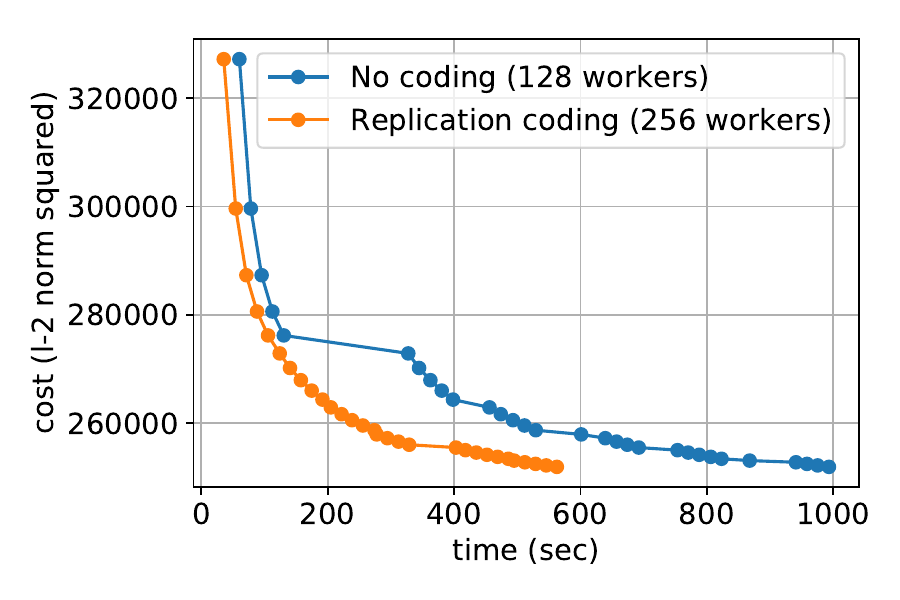}}
  \vspace{-3mm}
  \caption{Large-scale gradient descent example on ImageNet.}\medskip
  \label{fig:imagenet}
\end{minipage}
\end{figure}

We observe that the fifth iteration for the uncoded case takes longer than $100$ seconds and there are other iterations that take much longer than the rest of the iterations. This is expected when we do not have redundancy since even if there is only one straggling worker node, the central node waits for the straggling node to return its output before starting the next iteration of the gradient descent algorithm. Figure \ref{fig:imagenet} verifies that this effect is mitigated with replication coding.
Furthermore, we note that the coded version achieves roughly a 50\% reduction in the computation time while using twice many AWS Lambda functions. Thus, the price of the overall computation stays the same since pricing is calculated based on the duration of time that the functions take executing. This experiment verifies the effectiveness of our approach as it leads to faster computing while keeping the overall price the same.

\subsection{Extension to Optimizing Nonlinear Black-Box Functions}
In this subsection, we present numerical results for the  application of our randomized polar codes to black-box optimization problems. This is an extension of our approach to computing linear functions by considering a linear approximation to the gradient for nonlinear objective functions. The details are given in the appendix. Our black-box optimization strategy only accesses the objective through function evaluations. This makes our method applicable to any arbitrary objective function.  We consider the optimization of the nonlinear objective $f(\theta) = \|A\theta - b\|_1$ as a test case for the proposed black-box optimization method.

\begin{figure}
  \begin{center}
  \includegraphics[width=0.55\columnwidth]{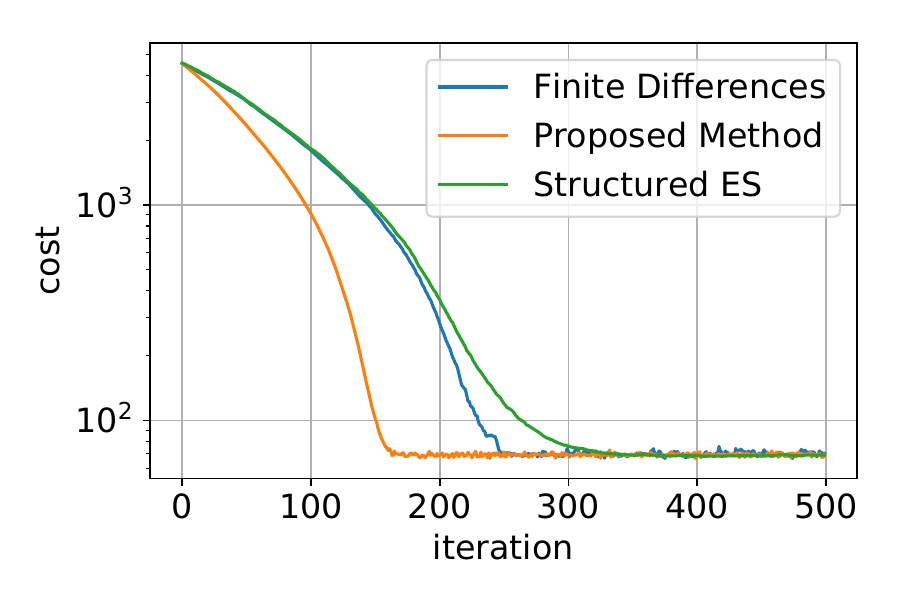}
  \end{center}
  \vspace{-5mm}
  \caption{Black-box optimization of the function $f(\theta)=\|A\theta-b\|_1$ where $A \in \mathbb{R}^{200\times 32}, b \in \mathbb{R}^{200}$.}
  \label{l1_norm_example_comparison}
\end{figure}

Figure \ref{l1_norm_example_comparison} shows the cost as a function of iterations when we use gradient descent algorithm with gradient estimates obtained by the finite differences method, the proposed coded black-box optimization method, and the structured evolution strategies method (these methods are described in the appendix). 
For fairness in comparing these methods, we used straggler-resilient versions of the finite differences method and the structured evolution strategies method in obtaining these results. We wait for the first arriving $16$ worker outputs out of the total $32$ outputs to make a gradient update for the finite differences method. For the structured evolution strategies method, the perturbation directions are generated from the rows of the matrix product $HD$ which is a $32\times 32$ matrix, so we wait for the first arriving $16$ outputs out of the $32$. Finally, we implement the proposed method by utilizing a rate of $\frac{1}{2}$ with a total of $64$ workers and wait for the first decodable set of outputs out of these $64$ outputs. Figure \ref{l1_norm_example_comparison} illustrates that having all the entries of the gradient estimate through decoding leads to faster convergence than having only a half of the entries of the gradient estimate. It also shows that the proposed method results in faster convergence compared to the structured evolution strategies method.

\subsection{Encoding and Decoding Speed Comparison}
Figure \ref{coding_time_comparison_2} shows the time that encoding and decoding algorithms take as a function of the number of nodes $N$ for Reed-Solomon codes and our approach for exact recovery using polar codes. For Reed-Solomon codes, we have implemented two separate approaches for encoding and decoding. The first is the naive approach where encoding is done using matrix multiplication ($O(N^2)$) and decoding is done by solving a linear system ($O(N^3)$), hence the naive encoder and decoder can support full-precision data. The second approach is the fast implementation for both encoding and decoding (of complexities $O(N\log N)$ and $O(N\log^2 N)$, respectively). The fast implementation is based on Fermat Number Transform (FNT), hence only supports finite field data. In obtaining the plots in Figure \ref{coding_time_comparison_2}, we used $0.5$ as the rate and performed the computation $Ax$ where $A$ is $(100N \times 5000)$-dimensional and $x$ is $(5000\times 1000)$-dimensional. The curve in Figure \ref{coding_time_comparison_2}(b) with cross markers and dashed lines, labeled as 'high error', indicates that the error due to the decoder is unacceptably high. This happens since the linear system that we need to solve for recovery is ill-conditioned.

\begin{figure}
\centering
\begin{minipage}[b]{0.47\columnwidth}
  \centering
\centerline{\includegraphics[width=\textwidth]{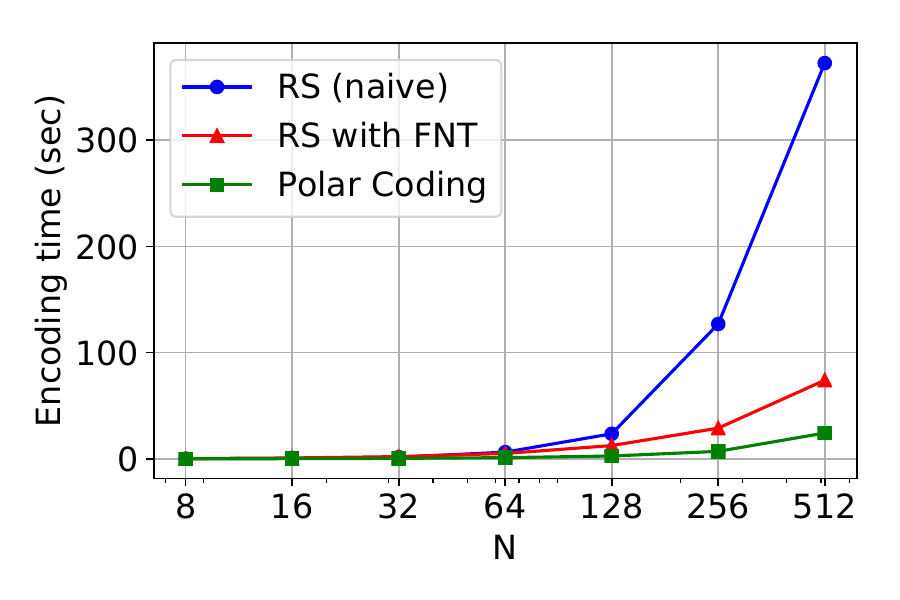}}
  \vspace{-3mm}
  \centerline{\footnotesize (a) Encoding}\medskip
\end{minipage}
\hfill
\begin{minipage}[b]{0.47\columnwidth}
  \centering
\centerline{\includegraphics[width=\textwidth]{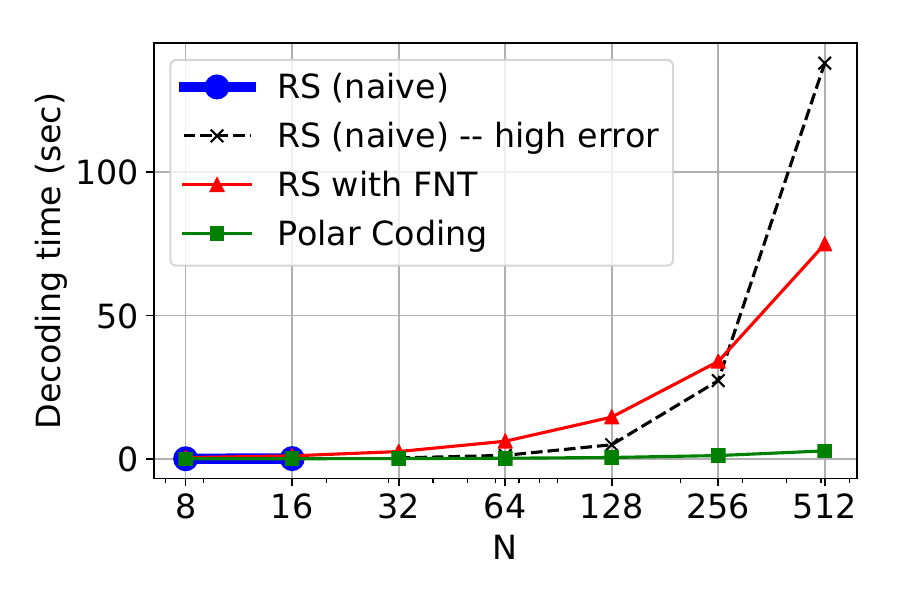}}
  \vspace{-3mm}
  \centerline{\footnotesize (b) Decoding}\medskip
\end{minipage}
\vspace{-5mm}
\caption{Comparison of encoding and decoding speeds for RS and polar codes.}
\label{coding_time_comparison_2}
\end{figure}

Figure \ref{coding_time_comparison_2} illustrates that our approach for exact recovery takes considerably less time for encoding and decoding compared to Reed-Solomon codes. This is because of the constants hidden in the complexities of fast decoders for Reed-Solomon decoders which is not the case for polar codes. We note that it might be more advantageous to use RS codes for small $N$ values because they have the MDS properties and can encode and decode fast enough for small $N$. However, in serverless computing where each function has limited resources and hence using large $N$ values is usually the case, we need faster encoding and decoding algorithms. Considering the comparison in Figure \ref{coding_time_comparison_2}, our approach using polar codes is more suitable for serverless computing where $N$ is large.

\subsection{Empirical Distribution of Decodability Times}
We refer to the time instance where the available outputs become decodable for the first time as \textit{decodability time}. Figure \ref{decod_time_histograms} shows the histograms of the decodability time for different values of $N$ for polar, LT, and MDS codes, respectively. These histograms were obtained by sampling i.i.d. worker run times with replacement from the input distribution whose CDF is plotted in Figure \ref{polarized_cdfs}(a) and by repeating this $1000$ times. Further, $\epsilon=0.375$ was used as the erasure probability. We observe that as $N$ increases, the distributions converge to the dirac delta function, showing that for large $N$ values, the decodability time becomes deterministic.

Plots in Figure \ref{decod_time_histograms}(d,e,f) are the decodability time histograms for LT codes with peeling decoder. The degree distribution is the robust soliton distribution as suggested in \cite{mallick2018rateless}. We see that polar codes achieve better decodability times than LT codes. Plots in Figure \ref{decod_time_histograms}(g,h,i) on the other hand show that MDS codes perform better than polar codes in terms of decodability time, which is expected. When considering this result, one should keep in mind that for large $N$, MDS codes take much longer times to encode and decode compared to polar codes as we discussed previously. In addition, we see that for large $N$ values, the gap between the decodability time performances closes.

\begin{figure}
\begin{minipage}[b]{0.28\columnwidth}
  \centering
  \centerline{\includegraphics[width=\textwidth]{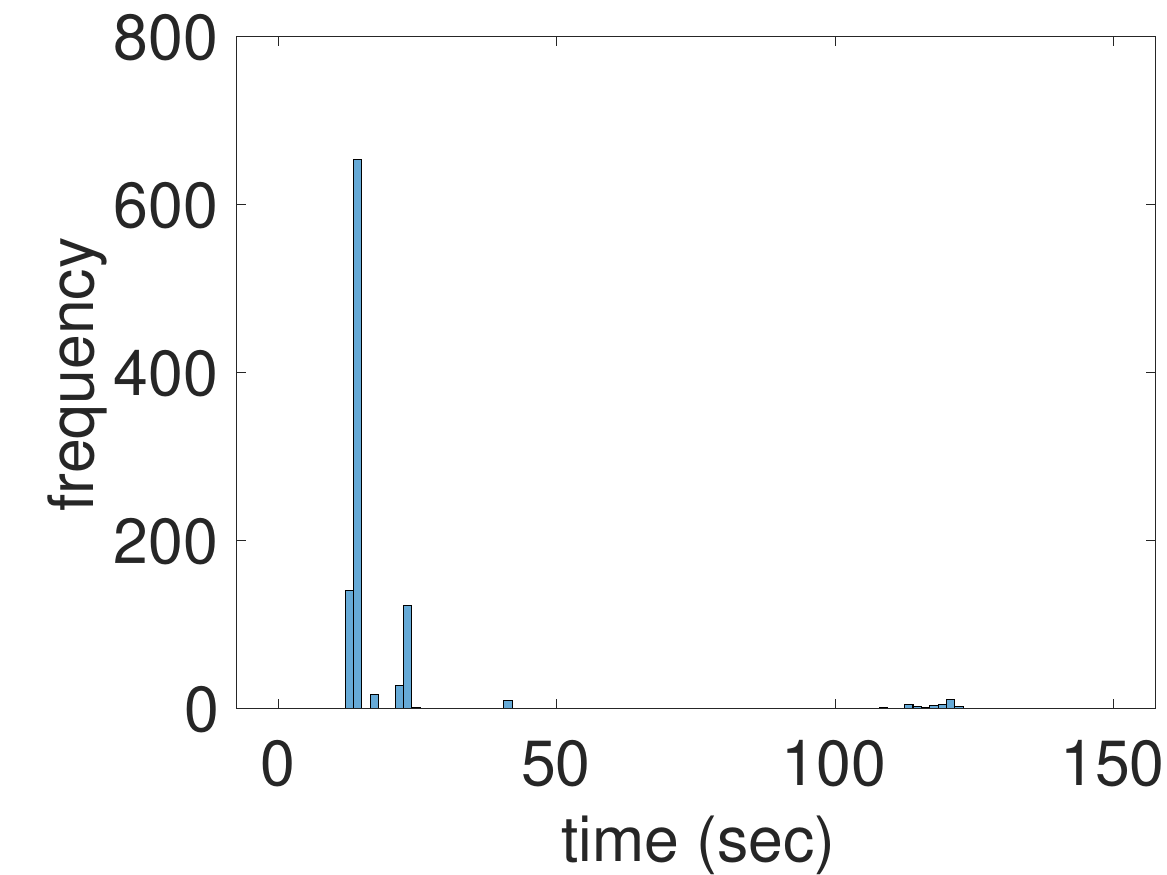}}
  \vspace{-2mm}
  \centerline{\footnotesize (a) Polar, $N=8$}\medskip
\end{minipage}
\hfill
\begin{minipage}[b]{0.28\columnwidth}
  \centering
  \centerline{\includegraphics[width=\textwidth]{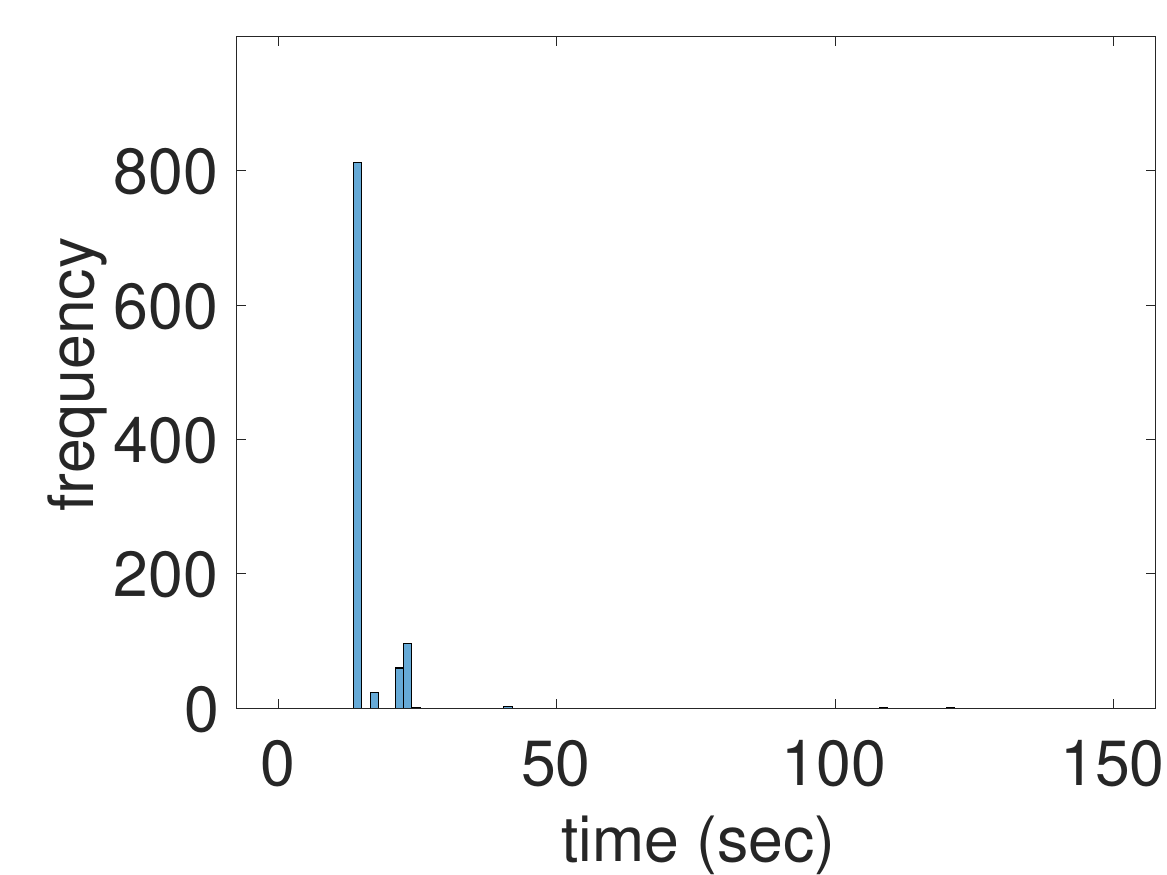}}
  \vspace{-2mm}
  \centerline{\footnotesize (b) Polar, $N=64$}\medskip
\end{minipage}
\hfill
\begin{minipage}[b]{0.28\columnwidth}
  \centering
  \centerline{\includegraphics[width=\textwidth]{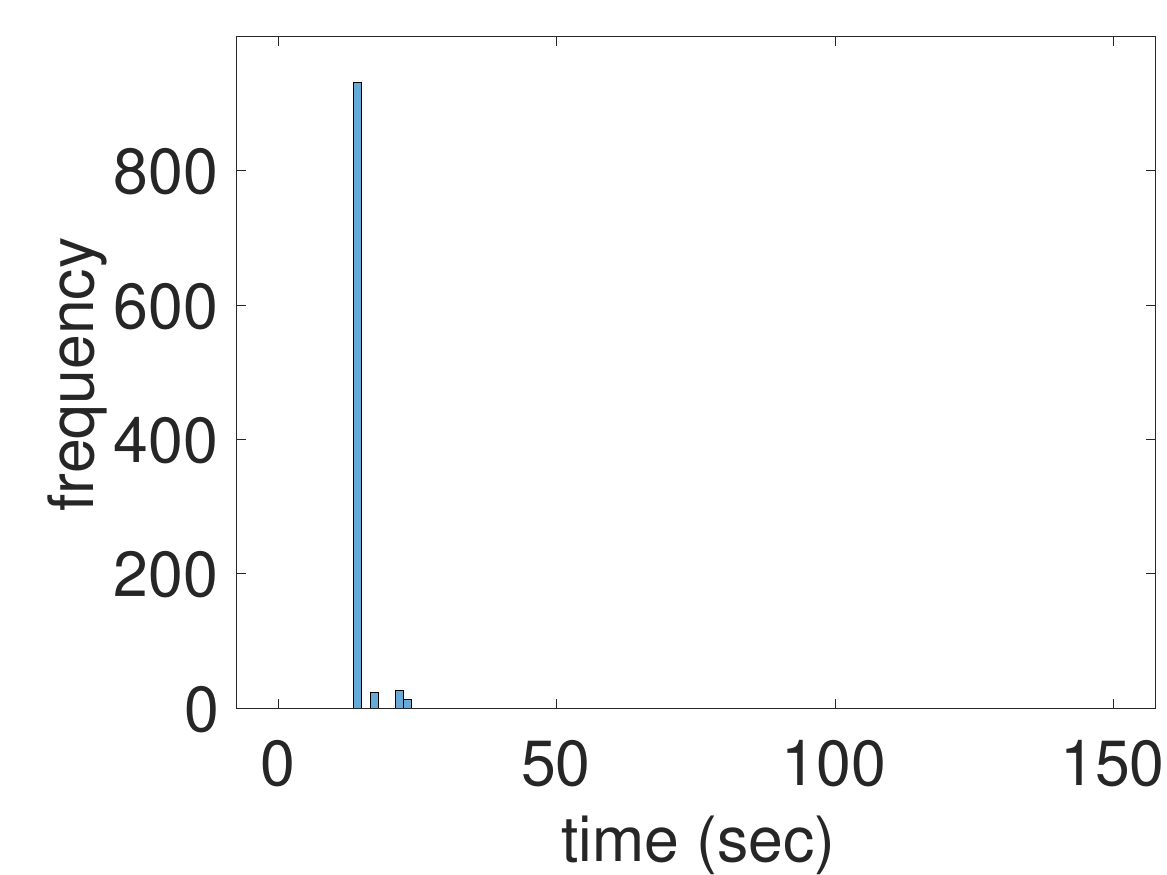}}
  \vspace{-2mm}
  \centerline{\footnotesize (c) Polar, $N=512$}\medskip
\end{minipage}
\hfill
\begin{minipage}[b]{0.28\columnwidth}
  \centering
  \centerline{\includegraphics[width=\textwidth]{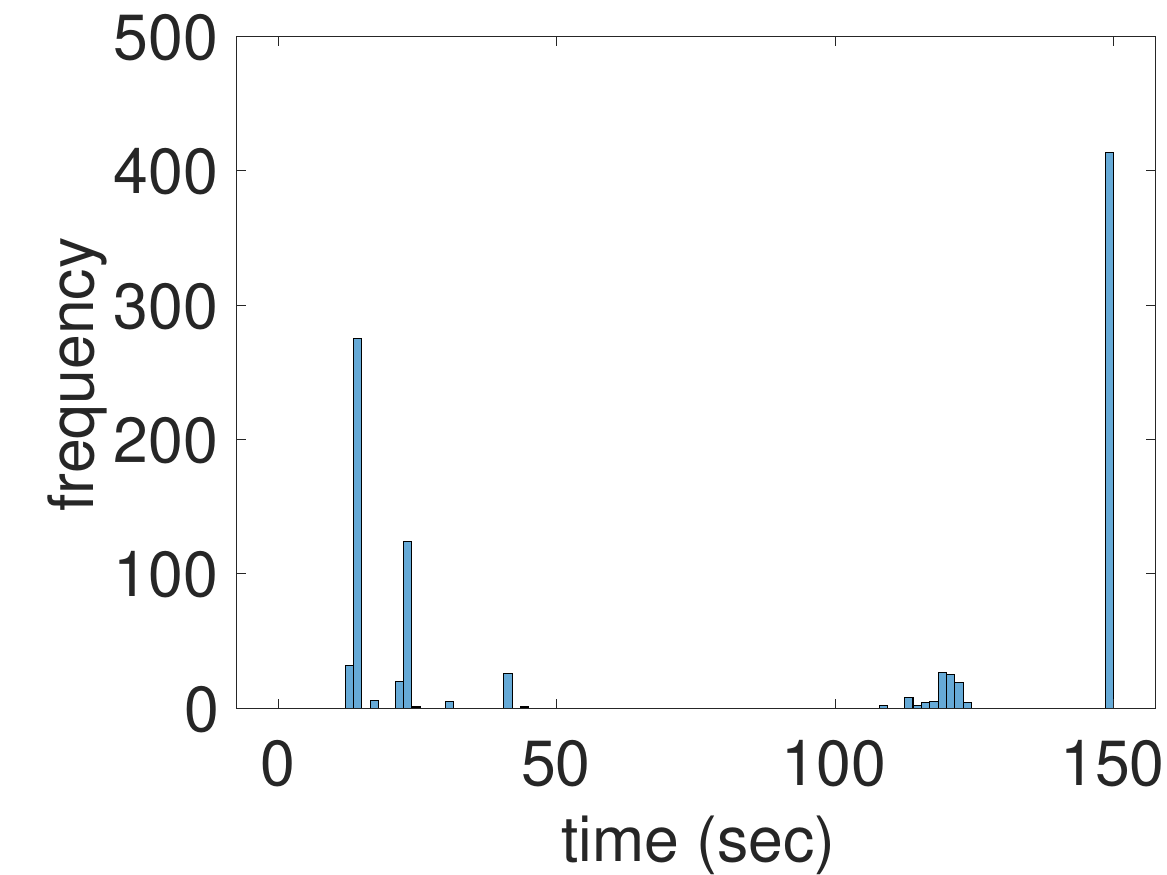}}
  \vspace{-2mm}
  \centerline{\footnotesize (d) LT, $N=8$}\medskip
\end{minipage}
\hfill
\begin{minipage}[b]{0.28\columnwidth}
  \centering
  \centerline{\includegraphics[width=\textwidth]{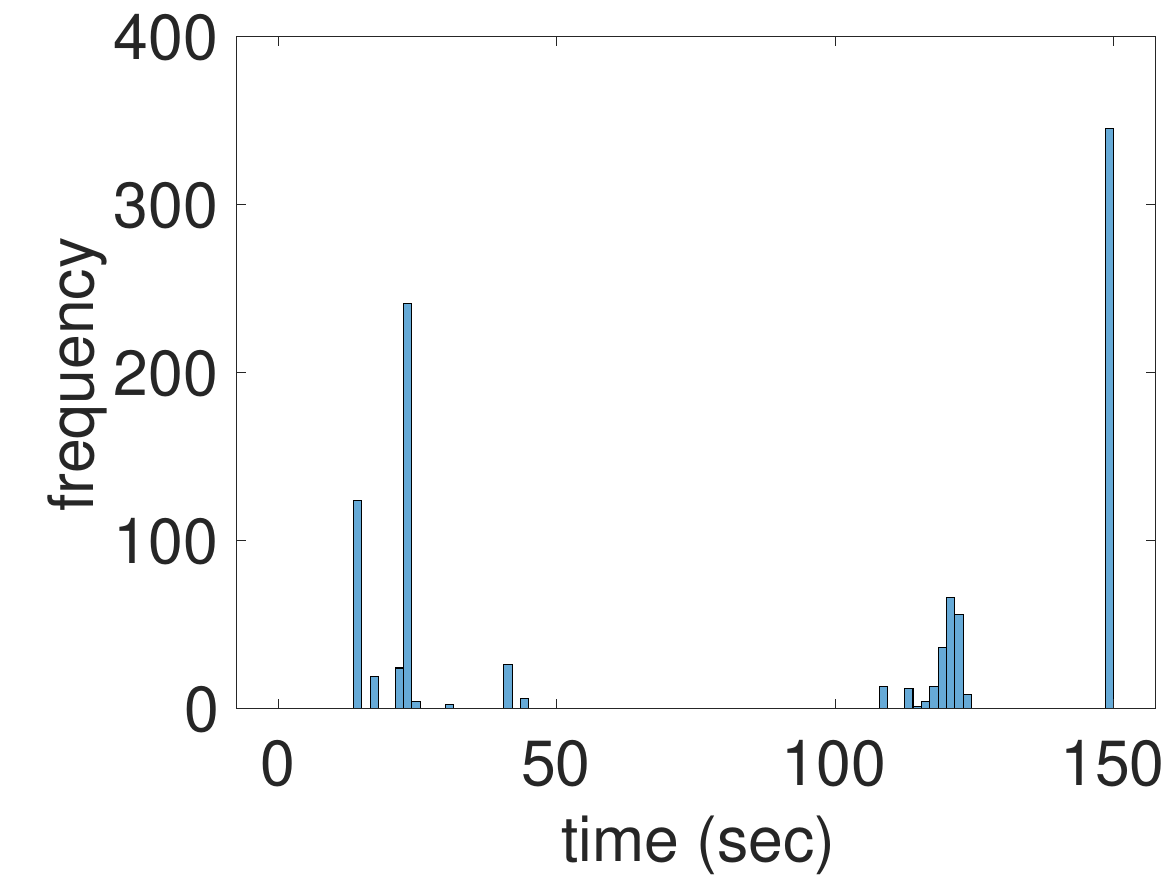}}
  \vspace{-2mm}
  \centerline{\footnotesize (e) LT, $N=64$}\medskip
\end{minipage}
\hfill
\begin{minipage}[b]{0.28\columnwidth}
  \centering
  \centerline{\includegraphics[width=\textwidth]{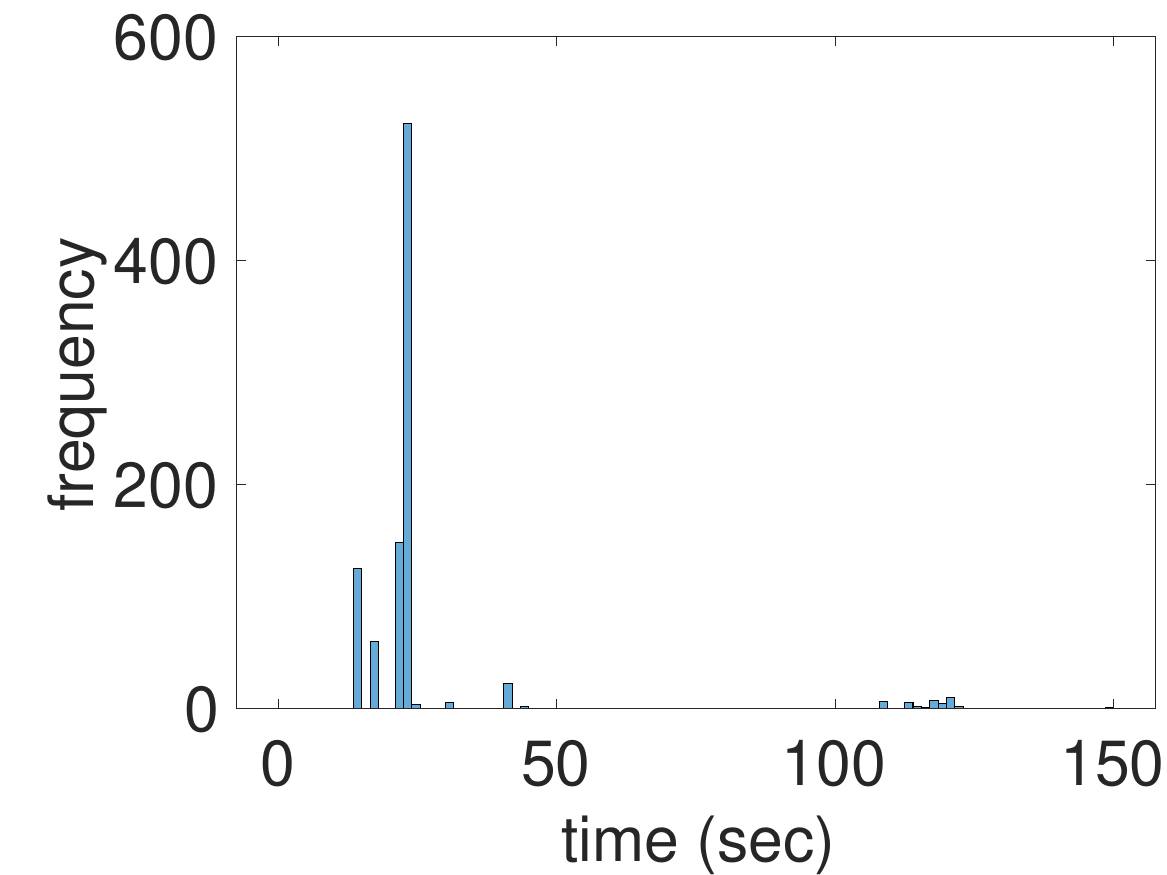}}
  \vspace{-2mm}
  \centerline{\footnotesize (f) LT, $N=512$}\medskip
\end{minipage}
\hfill
\begin{minipage}[b]{0.28\columnwidth}
  \centering
  \centerline{\includegraphics[width=\textwidth]{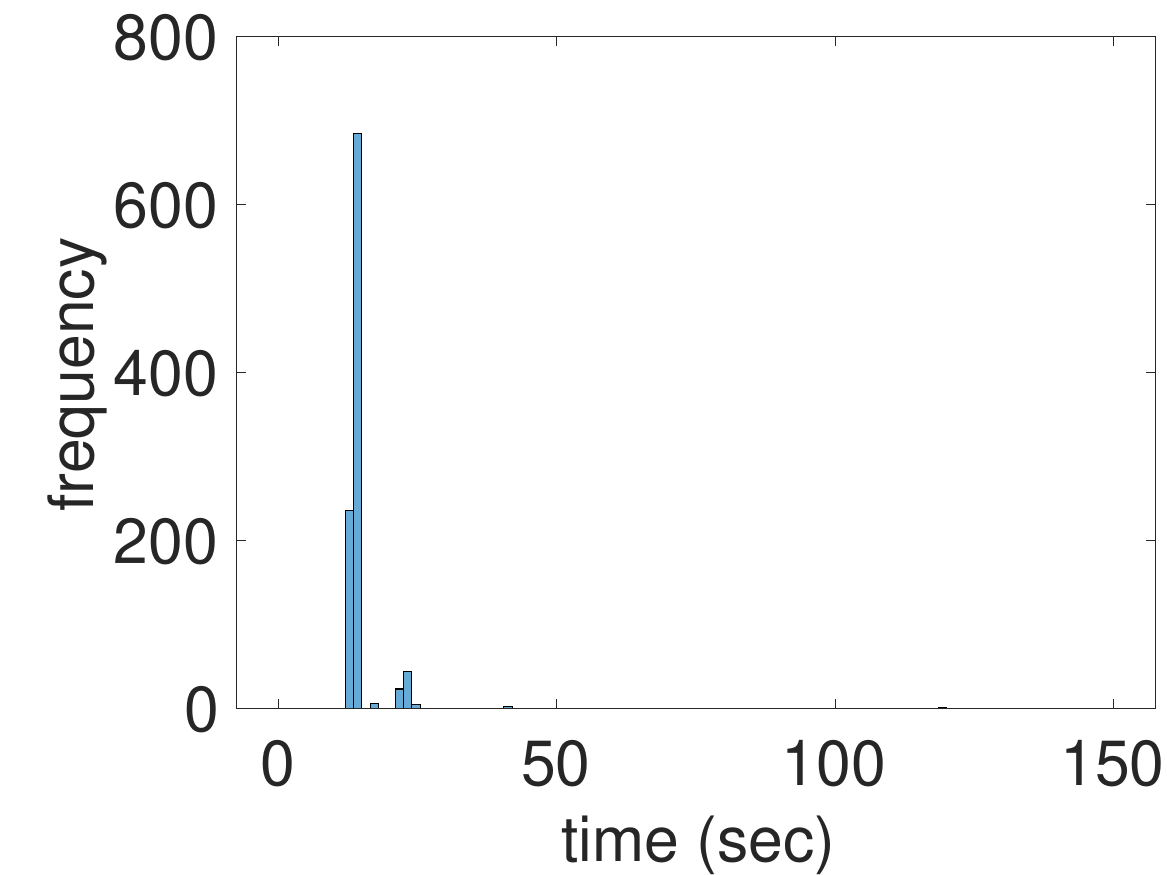}}
  \vspace{-2mm}
  \centerline{\footnotesize (g) MDS, $N=8$}\medskip
\end{minipage}
\hfill
\begin{minipage}[b]{0.28\columnwidth}
  \centering
  \centerline{\includegraphics[width=\textwidth]{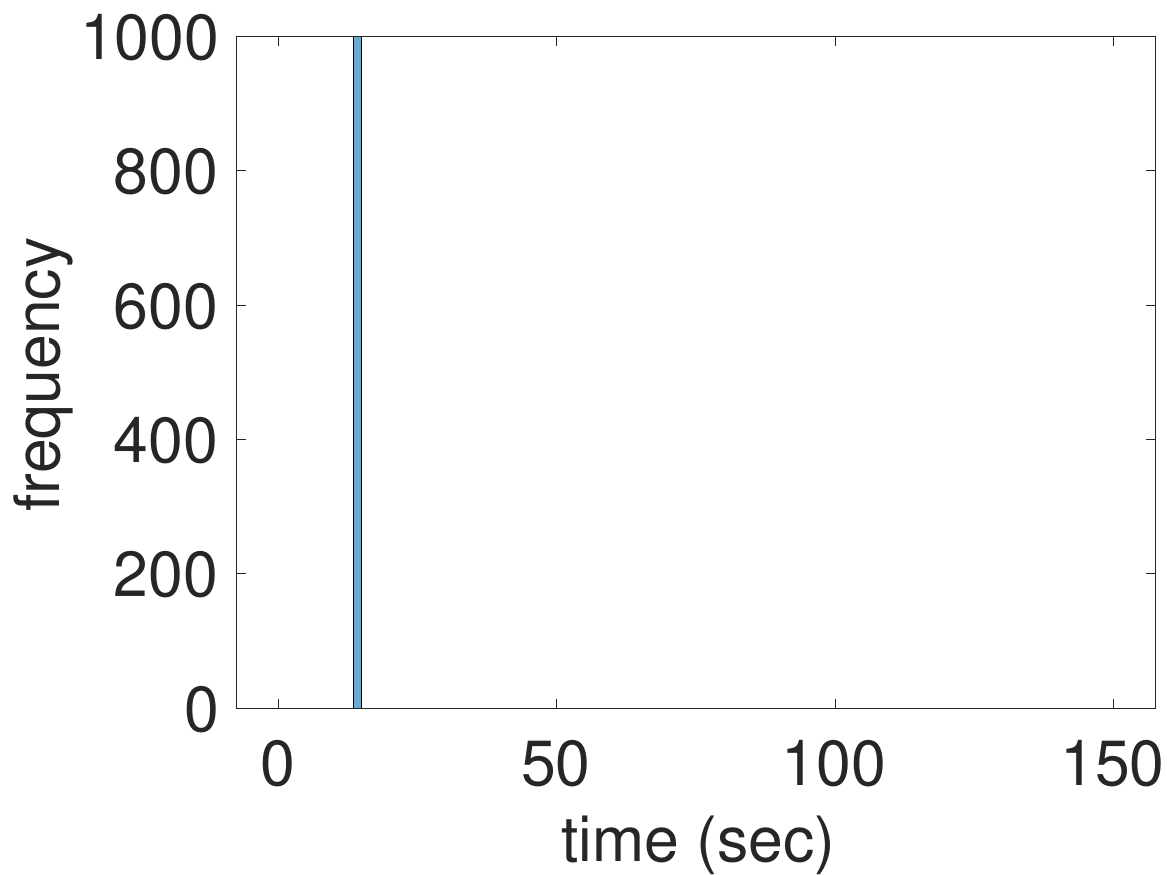}}
  \vspace{-2mm}
  \centerline{\footnotesize (h) MDS, $N=64$}\medskip
\end{minipage}
\hfill
\begin{minipage}[b]{0.28\columnwidth}
  \centering
  \centerline{\includegraphics[width=\textwidth]{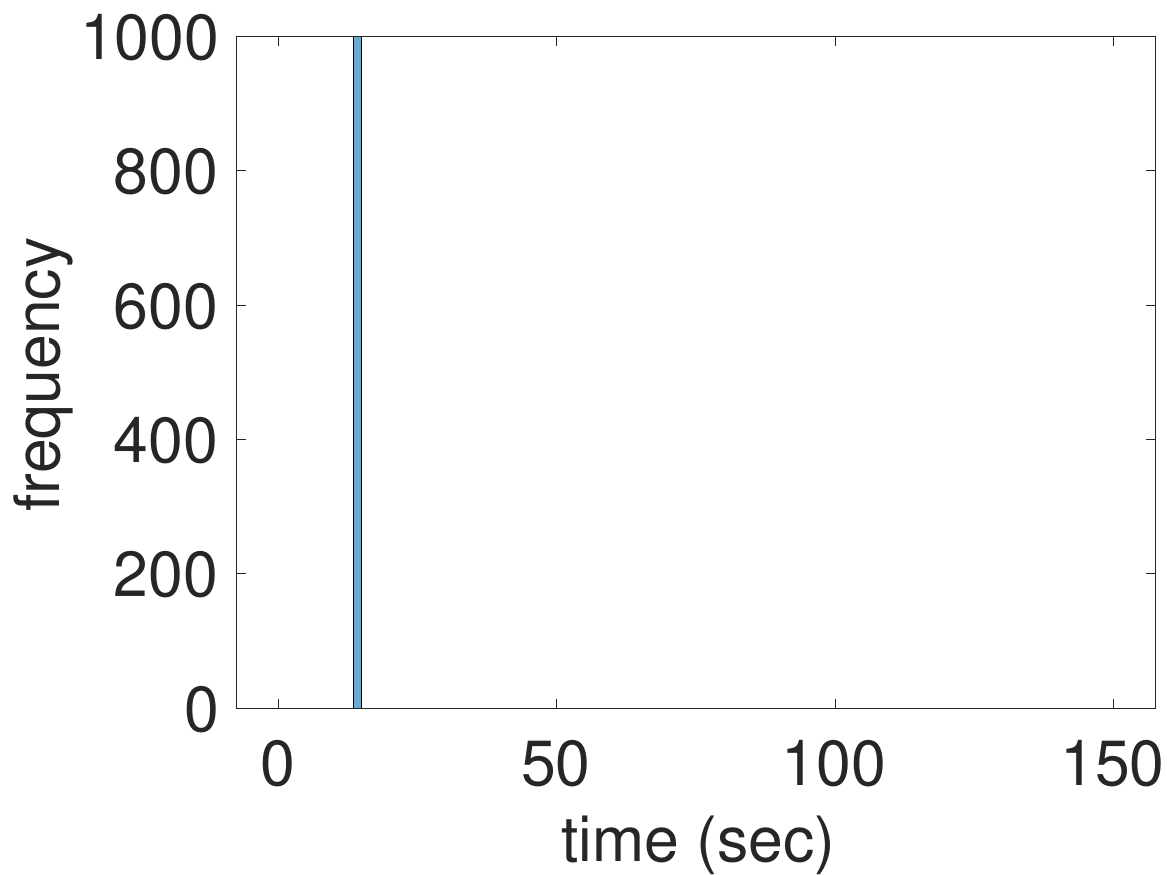}}
  \vspace{-2mm}
  \centerline{\footnotesize (i) MDS, $N=512$}\medskip
\end{minipage}
\vspace{-5mm}
\caption{Histograms of decodability time for polar, LT, and MDS codes.}
\label{decod_time_histograms}
\end{figure}
\section{Conclusion} \label{sec:conclusion}

We have introduced a method that combines polar coding based coded computation and randomized sketching algorithms. Due to their low complexity and simple encoding and decoding algorithms, polar codes lead to favorable run time performance in coded computation. It is critical to note that a low complexity decoder is particularly advantageous when addressing large-scale datasets and a high number of nodes.

A number of works in coded computation literature employ MDS codes for inserting redundancy into computations. In the case where one wishes to work with full-precision data and use Reed-Solomon codes, decoding requires solving a linear system. This will require cubic complexity and lead to unstable solutions for systems with high number of nodes since we are solving a Vandermonde based linear system.
Furthermore, there are many works that restrict their schemes to working with values from a finite field of some size $q$. In that case, it is possible to use fast decoding algorithms which are based on fast algorithms for polynomial interpolation. One such decoding algorithm is given in \cite{soro2009rs_decoder}, which provides $O(N\log N)$ encoding and decoding algorithms for Reed-Solomon erasure codes based on Fermat Number Transform (FNT). The complexity for the encoder is the same as taking a single FNT transform and for the decoder, it is equal to taking $8$ FNT transforms. To compare with polar codes, polar codes require $N\log N$ operations for both encoding and decoding. Another work where a fast erasure decoder for Reed-Solomon codes is presented is \cite{didier2009rs_decoder} which presents a decoder that works in time $O(N\log^2 N)$.

There exist many other fast decoding algorithms for RS codes with complexities as low as $O(N\log N)$. However, the decoding process in these algorithms usually requires taking a fast transform (e.g. FNT) many times and are limited to finite fields. Often these fast decoding algorithms have large hidden constants in their complexity and hence quadratic time decoding algorithms are sometimes preferred over them. Polar codes, on the other hand, provide very straightforward and computationally inexpensive encoding and decoding algorithms. One of our contributions is the design of an efficient decoder for polar codes tailored for the erasure channel that can decode full-precision data.

\bibliographystyle{IEEEtran}
\bibliography{refs}

\begin{thebibliography}{10}
\providecommand{\url}[1]{#1}
\csname url@samestyle\endcsname
\providecommand{\newblock}{\relax}
\providecommand{\bibinfo}[2]{#2}
\providecommand{\BIBentrySTDinterwordspacing}{\spaceskip=0pt\relax}
\providecommand{\BIBentryALTinterwordstretchfactor}{4}
\providecommand{\BIBentryALTinterwordspacing}{\spaceskip=\fontdimen2\font plus
\BIBentryALTinterwordstretchfactor\fontdimen3\font minus
  \fontdimen4\font\relax}
\providecommand{\BIBforeignlanguage}[2]{{%
\expandafter\ifx\csname l@#1\endcsname\relax
\typeout{** WARNING: IEEEtran.bst: No hyphenation pattern has been}%
\typeout{** loaded for the language `#1'. Using the pattern for}%
\typeout{** the default language instead.}%
\else
\language=\csname l@#1\endcsname
\fi
#2}}
\providecommand{\BIBdecl}{\relax}
\BIBdecl

\bibitem{Lee2018}
K.~Lee, M.~Lam, R.~Pedarsani, D.~Papailiopoulos, and K.~Ramchandran, ``Speeding
  up distributed machine learning using codes,'' \emph{IEEE Transactions on
  Information Theory}, vol.~64, no.~3, pp. 1514--1529, 2018.

\bibitem{baharav2018prodcodes}
T.~Baharav, K.~Lee, O.~Ocal, and K.~Ramchandran, ``Straggler-proofing
  massive-scale distributed matrix multiplication with $d$-dimensional product
  codes,'' \emph{IEEE International Symposium on Information Theory (ISIT)},
  pp. 1993--1997, 2018.

\bibitem{yu2017polycode}
Q.~Yu, M.~A. Maddah-Ali, and A.~S. Avestimehr, ``Polynomial codes: An optimal
  design for high-dimensional coded matrix multiplication,'' \emph{Adv. in
  Neural Info. Proc. Systems (NeurIPS) 30}, pp. 4406--4416, 2017.

\bibitem{dean88anytime}
T.~Dean and M.~Boddy, ``An analysis of time-dependent planning,'' in
  \emph{Proceedings of the Seventh AAAI National Conference on Artificial
  Intelligence}, ser. AAAI'88.\hskip 1em plus 0.5em minus 0.4em\relax AAAI
  Press, 1988, p. 49–54.

\bibitem{bartan2019straggler}
B.~Bartan and M.~Pilanci, ``Straggler resilient serverless computing based on
  polar codes,'' in \emph{2019 57th Annual Allerton Conference on
  Communication, Control, and Computing (Allerton)}, 2019, pp. 276--283.

\bibitem{polar2009arikan}
E.~Arikan, ``Channel polarization: A method for constructing capacity-achieving
  codes for symmetric binary-input memoryless channels,'' \emph{IEEE
  Transactions on Information Theory}, vol.~55, pp. 3051--3073, 2009.

\bibitem{dutta2018optimalrec}
S.~Dutta, M.~Fahim, F.~Haddadpour, H.~Jeong, V.~Cadambe, and P.~Grover, ``On
  the optimal recovery threshold of coded matrix multiplication,'' \emph{IEEE
  Transactions on Information Theory}, vol.~66, no.~1, pp. 278--301, 2020.

\bibitem{yu2018straggler}
Q.~Yu, M.~A. Maddah-Ali, and A.~S. Avestimehr, ``Straggler mitigation in
  distributed matrix multiplication: Fundamental limits and optimal coding,''
  in \emph{2018 IEEE International Symposium on Information Theory (ISIT)},
  2018, pp. 2022--2026.

\bibitem{yu2020entangled}
Q.~Yu and A.~S. Avestimehr, ``Entangled polynomial codes for secure, private,
  and batch distributed matrix multiplication: Breaking the "cubic" barrier,''
  in \emph{2020 IEEE International Symposium on Information Theory (ISIT)},
  2020, pp. 245--250.

\bibitem{esfahanizadeh2022layered}
H.~Esfahanizadeh, A.~Cohen, M.~Médard, and S.~Shamai~Shitz, ``Distributed
  computations with layered resolution,'' in \emph{2022 IEEE 11th Int. Conf. on
  Cloud Networking (CloudNet)}, 2022, pp. 257--261.

\bibitem{yang2019timely}
C.-S. Yang, R.~Pedarsani, and A.~S. Avestimehr, ``Timely coded computing,'' in
  \emph{2019 IEEE International Symposium on Information Theory (ISIT)}, 2019,
  pp. 2798--2802.

\bibitem{jahani2023berrut}
T.~Jahani-Nezhad and M.~A. Maddah-Ali, ``Berrut approximated coded computing:
  Straggler resistance beyond polynomial computing,'' \emph{IEEE Transactions
  on Pattern Analysis and Machine Intelligence}, vol.~45, no.~1, pp. 111--122,
  2023.

\bibitem{wang19gradient_coding}
\BIBentryALTinterwordspacing
S.~Wang, J.~Liu, and N.~Shroff, ``Fundamental limits of approximate gradient
  coding,'' \emph{Proc. ACM Meas. Anal. Comput. Syst.}, vol.~3, no.~3, dec
  2019. [Online]. Available: \url{https://doi.org/10.1145/3366700}
\BIBentrySTDinterwordspacing

\bibitem{wang19coded_linear}
S.~Wang, J.~Liu, N.~Shroff, and P.~Yang, ``Computation efficient coded linear
  transform,'' in \emph{Proc. of the Twenty-Second Int. Conf. on Artificial
  Intelligence and Statistics}, ser. Proceedings of Machine Learning Research,
  vol.~89.\hskip 1em plus 0.5em minus 0.4em\relax PMLR, 16--18 Apr 2019, pp.
  577--585.

\bibitem{shashanka17lowrank}
\BIBentryALTinterwordspacing
S.~Ubaru, A.~Mazumdar, and Y.~Saad, ``Low rank approximation and decomposition
  of large matrices using error correcting codes,'' \emph{IEEE Trans. Inf.
  Theor.}, vol.~63, no.~9, p. 5544–5558, sep 2017. [Online]. Available:
  \url{https://doi.org/10.1109/TIT.2017.2723898}
\BIBentrySTDinterwordspacing

\bibitem{reisizadeh17coded}
A.~Reisizadeh, S.~Prakash, R.~Pedarsani, and A.~S. Avestimehr, ``Coded
  computation over heterogeneous clusters,'' \emph{IEEE Transactions on
  Information Theory}, vol.~65, no.~7, pp. 4227--4242, 2019.

\bibitem{gupta2018oversketch}
\BIBentryALTinterwordspacing
V.~Gupta, S.~Wang, T.~Courtade, and K.~Ramchandran, ``Oversketch: Approximate
  matrix multiplication for the cloud,'' in \emph{2018 IEEE International
  Conference on Big Data (Big Data)}.\hskip 1em plus 0.5em minus 0.4em\relax
  Los Alamitos, CA, USA: IEEE Computer Society, dec 2018, pp. 298--304.
  [Online]. Available:
  \url{https://doi.ieeecomputersociety.org/10.1109/BigData.2018.8622139}
\BIBentrySTDinterwordspacing

\bibitem{pilanci2021comppolarization}
M.~Pilanci, ``Computational polarization: An information-theoretic method for
  resilient computing,'' \emph{IEEE Transactions on Information Theory}, pp.
  1--1, 2021.

\bibitem{severinson2018lt}
A.~Severinson, A.~G. i~Amat, and E.~Rosnes, ``Block-diagonal and lt codes for
  distributed computing with straggling servers,'' \emph{IEEE Transactions on
  Communications}, 2018.

\bibitem{mallick2018rateless}
\BIBentryALTinterwordspacing
A.~Mallick, M.~Chaudhari, U.~Sheth, G.~Palanikumar, and G.~Joshi, ``Rateless
  codes for near-perfect load balancing in distributed matrix-vector
  multiplication,'' \emph{SIGMETRICS Perform. Eval. Rev.}, vol.~48, no.~1, p.
  95–96, jul 2020. [Online]. Available:
  \url{https://doi.org/10.1145/3410048.3410104}
\BIBentrySTDinterwordspacing

\bibitem{miguel2016anytime}
J.~S. Miguel and N.~E. Jerger, ``The anytime automaton,'' in \emph{2016
  ACM/IEEE 43rd Annual International Symposium on Computer Architecture
  (ISCA)}, 2016, pp. 545--557.

\bibitem{ferdinand2016anytime}
N.~S. Ferdinand and S.~C. Draper, ``Anytime coding for distributed
  computation,'' in \emph{2016 54th Annual Allerton Conference on
  Communication, Control, and Computing (Allerton)}, 2016, pp. 954--960.

\bibitem{ferdinand2017anytime}
N.~Ferdinand, B.~Gharachorloo, and S.~C. Draper, ``Anytime exploitation of
  stragglers in synchronous stochastic gradient descent,'' in \emph{2017 16th
  IEEE International Conference on Machine Learning and Applications (ICMLA)},
  2017, pp. 141--146.

\bibitem{carreira2018caseserverless}
J.~Carreira, P.~Fonseca, A.~Tumanov, A.~Zhang, , and R.~Katz, ``A case for
  serverless machine learning,'' \emph{Workshop on Systems for ML and Open
  Source Software at NeurIPS 2018}, 2018.

\bibitem{gupta20utility}
\BIBentryALTinterwordspacing
V.~Gupta, S.~Phade, T.~Courtade, and K.~Ramchandran, ``Utility-based resource
  allocation and pricing for serverless computing,'' 2020. [Online]. Available:
  \url{https://arxiv.org/abs/2008.07793}
\BIBentrySTDinterwordspacing

\bibitem{feng2018serverlesstraining}
L.~{Feng}, P.~{Kudva}, D.~{Da Silva}, and J.~{Hu}, ``Exploring serverless
  computing for neural network training,'' in \emph{2018 IEEE 11th
  International Conference on Cloud Computing (CLOUD)}, July 2018, pp.
  334--341.

\bibitem{wang2019serverlesslearning}
H.~{Wang}, D.~{Niu}, and B.~{Li}, ``Distributed machine learning with a
  serverless architecture,'' in \emph{IEEE INFOCOM 2019 - IEEE Conference on
  Computer Communications}, April 2019, pp. 1288--1296.

\bibitem{gupta20oversketched}
V.~Gupta, S.~Kadhe, T.~Courtade, M.~W. Mahoney, and K.~Ramchandran,
  ``Oversketched newton: Fast convex optimization for serverless systems,'' in
  \emph{2020 IEEE Int. Conf. on Big Data (Big Data)}, 2020, pp. 288--297.

\bibitem{salimans2017es}
T.~{Salimans}, J.~{Ho}, X.~{Chen}, S.~{Sidor}, and I.~{Sutskever}, ``Evolution
  strategies as a scalable alternative to reinforcement learning,'' \emph{arXiv
  preprint arXiv:1703.03864}, 2017.

\bibitem{choromanski2018structured}
K.~Choromanski, M.~Rowland, V.~Sindhwani, R.~Turner, and A.~Weller,
  ``Structured evolution with compact architectures for scalable policy
  optimization,'' in \emph{Proceedings of the 35th International Conference on
  Machine Learning}, ser. Proceedings of Machine Learning Research,
  vol.~80.\hskip 1em plus 0.5em minus 0.4em\relax PMLR, 10--15 Jul 2018, pp.
  970--978.

\bibitem{imagenet}
J.~Deng, W.~Dong, R.~Socher, L.-J. Li, K.~Li, and L.~Fei-Fei, ``Imagenet: A
  large-scale hierarchical image database,'' in \emph{2009 IEEE Conference on
  Computer Vision and Pattern Recognition}, 2009, pp. 248--255.

\bibitem{Mahoney11}
M.~W. Mahoney, ``Randomized algorithms for matrices and data,''
  \emph{Foundations and {T}rends in {M}achine {L}earning in Machine Learning},
  vol.~3, no.~2, 2011.

\bibitem{tropp2011improved}
J.~A. Tropp, ``Improved analysis of the subsampled randomized hadamard
  transform,'' \emph{Advances in Adaptive Data Analysis}, vol.~3, no. 01n02,
  pp. 115--126, 2011.

\bibitem{krahmer2011new}
F.~Krahmer and R.~Ward, ``New and improved johnson--lindenstrauss embeddings
  via the restricted isometry property,'' \emph{SIAM Journal on Mathematical
  Analysis}, vol.~43, no.~3, pp. 1269--1281, 2011.

\bibitem{krizhevsky2012imagenet}
A.~Krizhevsky, I.~Sutskever, and G.~E. Hinton, ``Imagenet classification with
  deep convolutional neural networks,'' in \emph{Advances in neural information
  processing systems}, 2012, pp. 1097--1105.

\bibitem{soro2009rs_decoder}
A.~{Soro} and J.~{Lacan}, ``Fnt-based reed-solomon erasure codes,'' in
  \emph{7th IEEE Consumer Comm. and Networking Conf.}, 2010, pp. 1--5.

\bibitem{didier2009rs_decoder}
\BIBentryALTinterwordspacing
F.~Didier, ``Efficient erasure decoding of reed-solomon codes,'' \emph{CoRR},
  vol. abs/0901.1886, 2009. [Online]. Available:
  \url{http://arxiv.org/abs/0901.1886}
\BIBentrySTDinterwordspacing

\bibitem{Boyd02}
S.~Boyd and L.~Vandenberghe, \emph{Convex optimization}.\hskip 1em plus 0.5em
  minus 0.4em\relax Cambridge, UK: Cambridge University Press, 2004.

\end{thebibliography}

\newpage
\appendices
\setcounter{page}{1}
\onecolumn
\section{Extension to Coded Black-Box Optimization} \label{sec:coded_optimization}

In this section, we extend our coding method to the minimization of $f(\theta)$ where $f : \mathbb{R}^d \to \mathbb{R}$. We assume that we do not have access to the analytical form of $f(\theta)$ or its gradient, and we assume that we are only able to make queries for function evaluations. This setting is called black-box optimization. 

We use $\nabla_v$ to denote directional derivative along the direction $v \in \mathbb{R}^d$. We use $e_i$ to denote the $i$'th unit column vector with the appropriate dimension. $H$ refers to the Hadamard matrix (its dimension can be determined from the context), and $\mathbf{H}$ is the Hessian of a function. We now briefly describe the black-box optimization methods of finite differences and evolution strategies and then present the proposed coded black-box optimization method.

\textbf{Finite Differences:} For black-box optimization problems, approximate gradients can be obtained by using the finite differences estimator. The derivative of a function $f(x)$ with respect to the variable $x_i$ can be approximated by using
\begin{align}\label{finite_diff}
    \frac{\partial f(x)}{\partial x_i} \approx \frac{f(x+\delta e_i) - f(x-\delta e_i)}{2\delta} \,,
\end{align}
where $\delta \in \mathbb{R}$ is a small scalar that determines the perturbation amount. One can obtain an approximate gradient using \eqref{finite_diff} and use it in gradient-based optimization methods such as gradient descent.

The finite differences estimator can easily exploit parallelism since each partial approximate derivative can be independently evaluated in different worker nodes in parallel. This method can be made straggler-resilient by using only the available derivative estimates and ignoring the outputs of the slower workers. However, this typically leads to slower convergence. Another alternative is to replicate finite difference calculations, which is not optimal from a coding theory perspective.

\textbf{Evolution Strategies:} Let us consider the following evolution strategies (ES) gradient estimator (\cite{salimans2017es}, \cite{choromanski2018structured})
\begin{align}\label{ES_estimator}
    \nabla f(\theta) \approx \frac{1}{2N\delta} \sum_{i=1}^{N}(f(\theta+\delta \epsilon_i)\epsilon_i - f(\theta-\delta \epsilon_i)\epsilon_i) \,,
\end{align}
where $\delta$ is the scaling coefficient for the random perturbation directions $\epsilon_i$ and $N$ is the number of perturbations. This estimator is referred to as antithetic evolution strategies gradient estimator. The random perturbation directions $\epsilon_i$ may be sampled from a standard multivariate Gaussian distribution $\mathcal{N}(0,I)$. Alternatively, $\epsilon_i$'s may be generated as the rows of the matrix $HD$ where $H$ is the Hadamard matrix and $D$ is a diagonal matrix with entries distributed as Rademacher distribution.

It is shown in \cite{choromanski2018structured} that if exploration (or perturbation) directions $\epsilon_i$ are orthogonal, the gradient estimators lead to a lower error. We note that the rows of $HD$ are orthogonal with the appropriate scaling factor. We omit this scaling factor by absorbing it in the $\delta$ term which scales perturbation directions $\epsilon_i$. This work considers the case where random perturbation directions are generated according to $HD$ and we will refer to it as \textit{structured evolution strategies} as it is done in \cite{choromanski2018structured}.

Evolution strategies can exploit parallelism as well since workers need to communicate only scalars which are the function evaluations and the random seeds used when generating the random perturbation directions.

\subsection{Distributed Black-Box Optimization using Polar Codes}

We now present the proposed method for speeding up distributed black-box optimization. We start by introducing some more notation and definitions. The derivative of a differentiable function $f$ at a point $\theta$ along the unit vector direction $e_i$ is the $i$'th component of the gradient $\nabla f$, that is, $
    \frac{\partial f}{\partial \theta_i} = e_i^T \nabla f.$
The directional derivative along $v$ is defined as follows
\begin{align}\label{direc_deriv_2}
    \nabla_{v}f = \lim_{\delta \rightarrow 0} \frac{f(\theta + \delta v) - f(\theta) }{\delta}\,.
\end{align}
If the function is differentiable at a point $\theta$, then the directional derivative exists along any direction $v$ and is a linear map \cite{Boyd02}. In this case we have
\begin{align}\label{direc_deriv_1}
    \nabla_{v}f = v^T \nabla f.
\end{align}
When we do not have access to exact gradients, we can employ a \emph{numerical directional derivative} by choosing a small $\delta$ in
\begin{align}\label{direc_deriv_approx}
    \nabla_{v}f \approx \frac{f(\theta + \delta v) - f(\theta) }{\delta} \,.
\end{align}
Note that the approximation in \eqref{direc_deriv_approx} is not symmetric, that is, it involves perturbing the parameters only along $+v$. We instead use the symmetric version of \eqref{direc_deriv_approx} for approximating derivatives in which the parameters are perturbed along both the directions $-v$ and $+v$:
\begin{align}\label{direc_deriv_symmetric}
    \nabla_{v}f \approx \frac{f(\theta + \delta v) - f(\theta - \delta v) }{2\delta}\,.
\end{align}
If we consider the Taylor series expansion for $f(\theta + \delta v)$ and $f(\theta - \delta v)$, we obtain
\begin{align}
    f(\theta + \delta v) &= f(\theta) + \delta \nabla f^T v + \frac{\delta^2}{2} v^T \mathbf{H}v + O(\delta^3) \nonumber \\
    f(\theta - \delta v) &= f(\theta) - \delta \nabla f^T v + \frac{\delta^2}{2} v^T \mathbf{H}v + O(\delta^3)
\end{align}
where $O(\delta^3)$ is a third order error term and $\mathbf{H}$ is the Hessian matrix for $f$. Substituting these expansions in \eqref{direc_deriv_symmetric}, we obtain
\begin{align}\label{direc_deriv_final}
    \nabla_{v}f &\approx \frac{ 2\delta \nabla f^T v + O(\delta^3) }{2\delta} = \nabla f^T v + O(\delta^2) \,.
\end{align}
This shows that for small $\delta$, the numerical directional derivative becomes approximately linear in $\nabla f$. Our proposed method makes use of this assumption that directional derivative estimates are approximately linear in the directions $v$ to ensure that coding can be applied to directional derivative estimates. To make this more concrete, let us consider the construction given in Figure \ref{example_construction}. The block $H$ corresponds to the Hadamard kernel $H=\bigl[\begin{smallmatrix} 1 & 1 \\ 1 & -1 \end{smallmatrix}\bigr]$.

\begin{figure}
  \centering
  \includegraphics[width=0.51\textwidth]{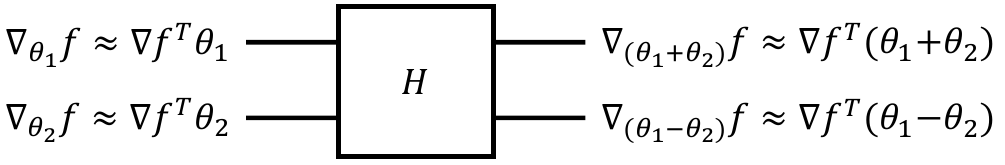}
  \caption{2-by-2 construction based on Hadamard transformation.}
  \label{example_construction}
\end{figure}

In Figure \ref{example_construction}, if we know the estimates for $\nabla_{(\theta_1+\theta_2)}f$ and $\nabla_{(\theta_1-\theta_2)}f$, we can compute the estimates for $\nabla_{\theta_1}f$ and $\nabla_{\theta_2}f$ because the directional derivative estimates are approximately linear in their corresponding directions. Furthermore, if we know the estimate for $\nabla_{\theta_1}f$, then it is sufficient to know only one of the estimates for $\nabla_{(\theta_1+\theta_2)}f$ or $\nabla_{(\theta_1-\theta_2)}f$ in order to be able to compute the estimate for $\nabla_{\theta_2}f$. This would happen if, for example, $\theta_1$ is the zero vector (i.e. frozen direction) because the estimate for $\nabla_{\theta_1}f$ would be zero and either of the directional derivative estimates from the right-hand side would be enough for us to obtain the estimate of $\nabla_{\theta_2}f$.
We refer to the directions $\theta_1$, $\theta_2$, $(\theta_1+\theta_2)$, $(\theta_1-\theta_2)$ as perturbation directions.

We now summarize the proposed method and the descriptions of the steps will follow.
\begin{itemize}
    \item Encode all the unit vectors in $\mathbb{R}^d$ to obtain the encoded perturbation directions.
    \item Assign each perturbation direction to a worker node and have them compute their directional derivative estimates using \eqref{direc_deriv_symmetric}.
    \item Central node starts collecting worker outputs. 
    \item When a decodable set of worker outputs is available, the central node decodes these outputs to obtain an estimate for the gradient.
    \item The central node computes the next iterate for the parameter $\theta$.
    \item Repeat until convergence or for a desired number of iterations.
\end{itemize}

The above procedure assumes that we want to estimate all entries of the gradient, but it is possible estimate only a portion of gradient entries by encoding only the unit vectors corresponding to the desired entries. 
Moreover, we note that we can always check whether decoding helps in obtaining a better objective function compared to the the structured evolution strategies and make the update accordingly. Since the decoding step is fast due to the efficient polar decoder, decoding the outputs but not using the recovered estimate does not place a heavy computational burden.

\subsection{Encoding}
Encoding is computed based on the Hadamard transformation whose kernel is shown in Figure \ref{two_by_two_perturb}.
\begin{figure}
  \centering
  \includegraphics[width=0.31\textwidth]{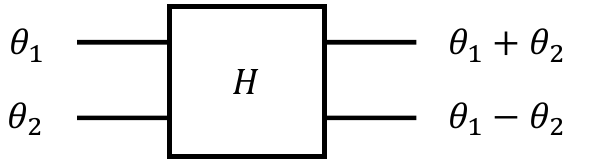}
  \caption{2-by-2 Hadamard transformation of the perturbation directions.}
  \label{two_by_two_perturb}
\end{figure}
In channel coding, freezing channels corresponds to sending known bits, e.g., the zero bit. Here, freezing inputs corresponds to setting them to all-zero coordinates so that the corresponding directional derivative is zero. This makes it possible, when decoding, to take the value of the derivative estimates for frozen nodes to be zero. For the information nodes, we simply send in unit vectors.
For instance, encoding for a function of $3$ variables using $N=4$ workers is shown in Figure \ref{example_encoding}. The resulting $4$ output vectors are the perturbation directions.

\begin{figure}
  \centering
  \includegraphics[width=0.6\textwidth]{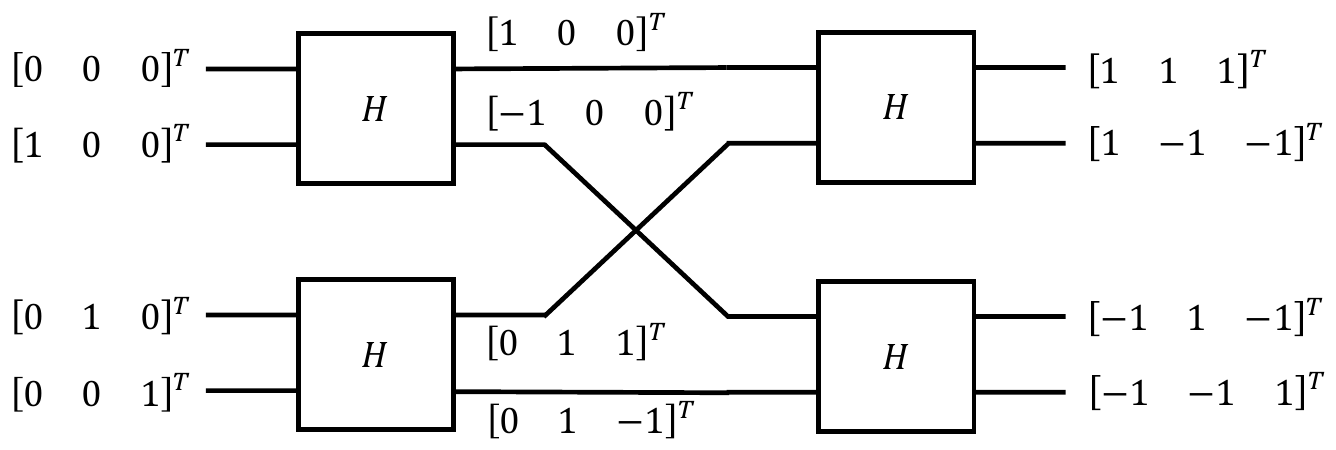}
  \caption{Example encoding for $N=4$}
  \label{example_encoding}
\end{figure}

In this construction, the rate is $3/4$ since $3$ out of $4$ inputs are used for sending in unit vectors. When the erasure probabilities of the transformed nodes are computed (see \cite{polar2009arikan}), one will see that the worst one corresponds to the first index. Hence, the first input is frozen and the remaining $3$ inputs are the information nodes. The frozen input is set to the zero vector and the others are set to $3$-dimensional unit vectors. Note that for the frozen input, perturbing by the zero vector is the same as not perturbing the variables and hence we get $f(\theta + 0) - f(\theta - 0) = 0$. This is important since during decoding, we will not have to do any computations to evaluate the value of the frozen inputs as we know they are zero. 

If we wish to use $N=8$ workers instead of $4$, we would set the worst $5$ inputs to zero vectors and the remaining best three inputs would be set to the unit vectors $e_1, e_2, e_3$. The rate in this case would be $3/8$, and this construction would be more straggler-resilient since we can recover the gradient estimate in the presence of even more stragglers compared to the $N=4$ case.

\textbf{Embedding interpretation:} It is also possible to perform the encoding step slightly differently for a different view on freezing channels. Instead of setting frozen inputs to zero vectors, one can increase the dimension of the inputs from $d$ to $N$ and set the frozen channels to unit vectors $e_j$ where $j \in \{d+1,d+2,\dots,N\}$. Because the function $f(\theta)$ accepts $d$-dimensional inputs, we could embed $f(\theta)$ into a higher dimension, that is, we could define $\tilde{f}(\tilde{\theta})$ where $\tilde{\theta} \in \mathbb{R}^N$ and $\tilde{f}(\tilde{\theta}) = f(\theta)$ if $\tilde{\theta}_i=\theta_i$ for $i \in \{1,2,\dots,d\}$. The advantage of this approach is that the output of the encoding step will be equal to the $H$ matrix with permuted rows.

\subsection{Decoding}
The sequential decoder given in Algorithm \ref{decoding_alg} directly handles linear operations such as matrix-vector multiplication with full-precision data (i.e. does not require finite field data). Since the gradient estimates $\nabla_{h_i}f$ can be linearly approximated, we can use the same decoding method.

We note that the structured evolution strategies method is based on exploring the parameter space along the rows of $HD$ instead of $H$ alone. So far, we have only considered the $H$ matrix for perturbation directions. It is possible to incorporate the diagonal matrix $D$ into our method as well.
Multiplying $H$ by $D$ from the right corresponds to multiplying all the entries of the $i$'th column of $H$ by $D_{i}$ for all $i$. In this case, instead of computing an estimate for the directional derivative along a direction of $v$, we  approximate the directional derivative with respect to the direction $Dv$.
\section{Proofs}
\begin{proof}[Proof of Lemma \ref{polarizing_kernel_lemmma}]
We first prove that if $K$ is a polarizing kernel, then it satisfies both of the given conditions. Let us assume an $f$ function satisfying the linearity property given in Definition \ref{polarizing_kernel_def}. Since $f$ satisfies the linearity property, we can write
\begin{align}
    \left[\begin{matrix} f(v_1) \\ f(v_2) \end{matrix}\right] = \left[\begin{matrix}K_{11} & K_{12}\\ K_{21} & K_{22} \end{matrix}\right] \times \left[\begin{matrix} f(u_1) \\ f(u_2) \end{matrix}\right].
\end{align} 
Computing $f(u_2)$ given the value of $f(u_1)$ in time $\min(T_1,T_2)$ means that $f(u_2)$ can be computed using $f(u_1)$ and either one of $f(v_1)$, $f(v_2)$ (whichever is computed earlier). This implies that we must be able to recover $f(u_2)$ using one of the following two equations
\begin{align} 
    K_{12} \times f(u_2) &= f(v_1) - K_{11}f(u_1) \label{lemma_eq_1} \\
    K_{22} \times f(u_2) &= f(v_2) - K_{21}f(u_1). \label{lemma_eq_2}
\end{align}
We use \eqref{lemma_eq_1} if $f(v_1)$ is known, and \eqref{lemma_eq_2} if $f(v_2)$ is known. This implies that both $K_{12}$ and $K_{22}$ need to be non-zero. Furthermore, to be able to compute $f(u_1)$ in time $\max(T_1, T_2)$ means that it is possible to find $f(u_1)$ using both $f(v_1)$ and $f(v_2)$ (note that we do not assume we know the value of $f(u_2)$). There are two scenarios where this is possible: Either at least one row of $K$ must have its first element as non-zero and its second element as zero, or $K$ must be invertible. Since we already found out that $K_{12}$ and $K_{22}$ are both non-zero, we are left with one scenario, which is that $K$ must be invertible.

We proceed to prove the other direction of the `if and only if' statement, which states that if a kernel $K \in \mathbb{R}^{2 \times 2}$ satisfies the given two conditions, then it is a polarizing kernel. We start by assuming an invertible $K \in \mathbb{R}^{2 \times 2}$ with both elements in its second column non-zero. Since $K$ is invertible, we can uniquely determine $f(u_1)$ when both $f(v_1)$ and $f(v_2)$ are available, which occurs at time $\max(T_1,T_2)$. Furthermore, assume we know the value of $f(u_1)$. At time $\min(T_1,T_2)$, we will also know one of $f(v_1)$, $f(v_2)$, whichever is computed earlier. Knowing $f(u_1)$, and any one of $f(v_1)$, $f(v_2)$, we can determine $f(u_2)$ using the suitable one of the equations \eqref{lemma_eq_1}, \eqref{lemma_eq_2} because $K_{12}$ and $K_{22}$ are both assumed to be non-zero. Hence this completes the proof that a kernel $K$ satisfying the given two conditions is a polarizing kernel.
\end{proof}

\begin{proof}[Proof of Theorem \ref{thm_decoder_kernel}]
By Lemma \ref{polarizing_kernel_lemmma}, we know that for $K$ to be a polarizing kernel, it must be invertible and must have both $K_{12}$ and $K_{22}$ as non-zero. For a $2\times 2$ matrix to be invertible with both second column elements as non-zero, at least one of the elements in the first column must also be non-zero. We now know that $K_{12}$, $K_{22}$ and at least one of $K_{11}$, $K_{21}$ must be non-zero in a polarizing kernel $K$. It is easy to see that having all four elements of $K$ as non-zero leads to more computations than having only three elements of $K$ as non-zero. Hence, we must choose either $K_{11}$ or $K_{21}$ to be zero (it does not matter which one). It is possible to avoid any multiplications by selecting the non-zero elements of $K$ as ones. Hence, both $K=\bigl[\begin{smallmatrix}1 & 1\\ 0 & 1\end{smallmatrix}\bigr]$ and $K=\bigl[\begin{smallmatrix}0 & 1\\ 1 & 1\end{smallmatrix}\bigr]$ are polarizing kernels and lead to the same amount of computations, which is a single addition. This amount of computations is the minimum possible as otherwise $K$ will not satisfy the condition that a polarizing kernel must have at least $3$ non-zero elements.
\end{proof}

\section{Partial Construction}
In this section, we introduce a novel idea that we refer to as \textit{partial construction} to scale up the encoding procedure. Suppose that we are interested in computing the linear operation $Ax$ where we only encode $A$ and not $x$. If the data matrix $A$ is extremely large, it may be time consuming to encode the data. A way around having to encode a large $A$ is to consider partial code constructions, that is, for $A=[A_1^T, \dots, A_p^T]^T$, we encode each submatrix $A_i$ separately.  The encoded $A$ can be written as follows: 
\begin{align}
    \mathrm{encode}(A) = \begin{bmatrix}
    (Z \otimes I_{n/(ps)} )A_1 \\ \vdots \\ (Z \otimes I_{n/(ps)} )A_p
\end{bmatrix}.
\end{align}

It follows that decoding the outputs of the construction for the submatrix $A_i$ will give us $A_ix$. This results in a weaker straggler resilience, however, we get a trade-off between the computational load of encoding and straggler resilience. Partial construction also decreases the amount of computations required for decoding since instead of decoding a code with $N$ outputs (of complexity $O(N\log N)$), now we need to decode $p$ codes with $\frac{N}{p}$ outputs, which is of complexity $O(N\log (\frac{N}{p}))$.

In addition, partial construction makes it possible to parallel compute both encoding and decoding. Each code construction can be encoded and decoded independently from the rest of the constructions. Partial construction idea can also be applied to coded computation schemes based on other codes. For instance, one scenario where this idea is useful is when one is interested in using RS codes with full-precision data. Given that for large $N$ values, using RS codes with full-precision data becomes infeasible, one can construct many smaller size codes. When the code size is small enough, a Vandermonde-based linear system can be painlessly solved.

Another benefit of the partial construction idea is that for constructions of sizes small enough, encoding can be performed in the memory of the workers after reading the necessary data. This results in a straggler-resilient scheme without doing any pre-computing to encode the entire dataset. In-memory encoding could be useful for problems where the data matrix $A$ is changing over time because it might be too expensive to encode the entire dataset $A$ every time it gets updated.

\section{Coded Matrix Multiplication} \label{sec:mat_mat_mult}
In this section, we provide an extension to our proposed method to accommodate coding of both $A$ and $B$ for computing the matrix multiplication $AB$ (instead of coding only $A$). This can be thought of as a two-dimensional extension of our method. Let
$A = [ A_1^T, \dots ,A_{d_1}^T ]^T$ and $ B = [B_1 ,\dots, B_{d_2} ]$. Let us denote zero matrix padded version of $A$ by $\tilde{A}=[\tilde{A}_1^T, \dots, \tilde{A}_{N_1}^T]^T$ such that $\tilde{A}_i = 0$ if $i$ is a frozen channel index and $\tilde{A}_i = A_j$ if $i$ is a data channel index with $j$ the appropriate index. Similarly, we define $\tilde{B}=[\tilde{B}_1, \dots, \tilde{B}_{N_2}]$ such that $\tilde{B}_i = 0$ if $i$ is a frozen channel index and $\tilde{B}_i =B_j$ if $i$ is a data channel index with $j$ the appropriate index. Encoding on $\tilde{A}$ can be represented as $G_{N_1} \tilde{A}$ where $G_{N_1}$ is the $N_1$ dimensional generator matrix and acts on submatrices $\tilde{A}_i$. Similarly, encoding on $\tilde{B}$ would be $\tilde{B} G_{N_2}$.

Encoding $A$ and $B$ gives us  $N_1$ submatrices $(G_{N_1} \tilde{A})_i$ and $N_2$ submatrices $(\tilde{B} G_{N_2})_j$. We multiply the encoded matrices using $N_1 N_2$ workers with the $(i,j)$th worker computing the multiplication $(G_{N_1} \tilde{A})_i (\tilde{B} G_{N_2})_j$. So, the worker outputs will be of the form:

\begin{align}\label{definition_of_P}
    P = \left[\begin{matrix} (G_{N_1} \tilde{A})_1 (\tilde{B} G_{N_2})_1 & \hdots & (G_{N_1} \tilde{A})_1 (\tilde{B} G_{N_2})_{N_2} \\
    \vdots & \ddots & \\ 
    (G_{N_1} \tilde{A})_{N_1} (\tilde{B} G_{N_2})_1 & \hdots & (G_{N_1} \tilde{A})_{N_1} (\tilde{B} G_{N_2})_{N_2} \end{matrix}\right].
\end{align}
Note that for fixed $j$, the worker outputs are:

\begin{align}
    \left[\begin{matrix} (G_{N_1} \tilde{A})_1 (\tilde{B} G_{N_2})_j  \\ \vdots \\ (G_{N_1} \tilde{A})_{N_1} (\tilde{B} G_{N_2})_j \end{matrix}\right].
\end{align}
For fixed $j$, the outputs are linear in $(\tilde{B} G_{N_2})_j$, hence, it is possible to decode these outputs using the decoder we have for the 1D case. Similarly, for fixed $i$, the outputs are:

\begin{align}
    \left[\begin{matrix} (G_{N_1} \tilde{A})_i (\tilde{B} G_{N_2})_1 & \hdots & (G_{N_1} \tilde{A})_i (\tilde{B} G_{N_2})_{N_2} \end{matrix}\right].
\end{align}
For fixed $i$, the outputs are linear in $(G_{N_1} \tilde{A})_i$. It follows that the 1D decoding algorithm can be used for decoding the outputs. Based on these observations, the decoder algorithm for the 2D case is given in Algorithm \ref{2d_decoder}. The 2D decoding algorithm makes calls to the 1D encoding and decoding algorithms many times to fill in the missing entries of the encoded matrix $P$ defined in \eqref{definition_of_P}. When all missing entries of $P$ are computed, first all rows and then all columns of $P$ are decoded and finally, the frozen entries are removed to obtain the multiplication $AB$.

\begin{algorithm}
 \KwIn{the worker output matrix $P$}
 \KwResult{$y = A\times B$}
 
 \While{$P$ has missing entries}{
 \Comment*[r]{loop over rows}
 \vspace{-0.375cm}
  \For{$i \gets 0$ \textbf{to} $N_1-1$ } {
    \If{$P[i,:]$ has missing entries and is decodable}{
     decode $P[i,:]$ using Alg. \ref{decoding_alg} \\
     forward propagation to fill in $P[i,:]$ 
    }
  }
  \Comment*[r]{loop over columns}
  \vspace{-0.375cm}
  \For{$j \gets 0$ \textbf{to} $N_2-1$} {
    \If{$P[:,j]$ has missing entries and is decodable}{
     decode $P[:,j]$ using Alg. \ref{decoding_alg} \\
     forward propagation to fill in $P[:,j]$ 
    }
  }
 }
 decode all rows and then all columns of $P$ \\
 return entries of $P$ (ignoring the frozen entries)
 \caption{2D decoding algorithm.}
 \label{2d_decoder}
\end{algorithm}

\section{Privacy} \label{subsec:privacy}

Worker nodes may be unreliable in distributed computing and in such cases it is desirable to introduce privacy into the computation. It is possible to incorporate privacy into coded computation with polar codes in a very straightforward way. For simplicity, let us assume that no two workers can collude. Then, we note that selecting the last input as a random matrix $R$ in the code construction, that is, $U_N=R$, leads to all worker outputs containing the product $Rx$ as an additive term. Hence, this results in a privacy-preserving computation against single-node attacks.

\end{document}